\documentclass[11pt,letterpaper]{article}
\usepackage[margin=1in]{geometry}

\usepackage[expansion=false]{microtype} 
\usepackage{graphicx}
\usepackage{subcaption}
\usepackage{booktabs}
\usepackage{makecell}
\usepackage{amsmath}
\usepackage{amssymb}
\usepackage{amsbsy}
\usepackage{mathtools}
\usepackage{amsthm}
\DeclareFontShape{OT1}{cmr}{m}{scit}{<->ssub*cmr/m/sc}{}
\IfFileExists{bbm.sty}{\usepackage{bbm}}{\providecommand{\mathbbm}[1]{\mathbb{##1}}}
\IfFileExists{dsfont.sty}{\usepackage{dsfont}}{}
\usepackage{algorithm}
\IfFileExists{algpseudocode.sty}{%
  \usepackage{algpseudocode}%
}{%
  \usepackage{algorithmic}%
  \newcommand{\State}{\STATE}%
  \newcommand{\For}[1]{\FOR{##1}}%
  \newcommand{\EndFor}{\ENDFOR}%
  \newcommand{\If}[1]{\IF{##1}}%
  \newcommand{\Else}{\ELSE}%
  \newcommand{\EndIf}{\ENDIF}%
  \newcommand{\While}[1]{\WHILE{##1}}%
  \newcommand{\EndWhile}{\ENDWHILE}%
  \newcommand{\Require}{\REQUIRE}%
  \newcommand{\Ensure}{\ENSURE}%
}
\providecommand{\REQUIRE}{\Require}
\providecommand{\ENSURE}{\Ensure}
\providecommand{\STATE}{\State}

\providecommand{\FOR}[1]{\For{#1}}
\providecommand{\ENDFOR}{\EndFor}
\providecommand{\IF}[1]{\If{#1}}
\providecommand{\ELSE}{\Else}
\providecommand{\ENDIF}{\EndIf}
\providecommand{\WHILE}[1]{\While{#1}}
\providecommand{\ENDWHILE}{\EndWhile}
\usepackage{xcolor}
\usepackage{nicefrac}
\usepackage{esint}
\usepackage{comment}
\usepackage{ifthen}
\usepackage{float}
\usepackage{url}
\usepackage{bm}
\usepackage{colortbl}
\usepackage{enumitem}
\usepackage{prettyref} 
\usepackage{tikz}
\usetikzlibrary{positioning,arrows,decorations,decorations.markings,decorations.pathreplacing,decorations.pathmorphing,shadows,positioning,arrows.meta,matrix,fit}
\newrefformat{eq}{\savehyperref{#1}{\textup{(\ref*{#1})}}}
\newrefformat{eqn}{\savehyperref{#1}{Equation~\ref*{#1}}}
\newrefformat{lem}{\savehyperref{#1}{Lemma~\ref*{#1}}}
\newrefformat{def}{\savehyperref{#1}{Definition~\ref*{#1}}}
\newrefformat{line}{\savehyperref{#1}{line~\ref*{#1}}}
\newrefformat{thm}{\savehyperref{#1}{Theorem~\ref*{#1}}}
\newrefformat{corr}{\savehyperref{#1}{Corollary~\ref*{#1}}}
\newrefformat{sec}{\savehyperref{#1}{Section~\ref*{#1}}}
\newrefformat{ssec}{\savehyperref{#1}{Subsection~\ref*{#1}}}
\newrefformat{app}{\savehyperref{#1}{Appendix~\ref*{#1}}}
\newrefformat{ass}{\savehyperref{#1}{Assumption~\ref*{#1}}}
\newrefformat{ex}{\savehyperref{#1}{Example~\ref*{#1}}}
\newrefformat{fig}{\savehyperref{#1}{Figure~\ref*{#1}}}
\newrefformat{alg}{\savehyperref{#1}{Algorithm~\ref*{#1}}}
\newrefformat{rem}{\savehyperref{#1}{Remark~\ref*{#1}}}
\newrefformat{conj}{\savehyperref{#1}{Conjecture~\ref*{#1}}}
\newrefformat{prop}{\savehyperref{#1}{Proposition~\ref*{#1}}}
\newrefformat{proto}{\savehyperref{#1}{Protocol~\ref*{#1}}}
\newrefformat{prob}{\savehyperref{#1}{Problem~\ref*{#1}}}
\newrefformat{claim}{\savehyperref{#1}{Claim~\ref*{#1}}}
\newrefformat{fact}{\savehyperref{#1}{Fact~\ref*{#1}}}
\usepackage[round,authoryear]{natbib}

\usepackage[breaklinks]{hyperref}
\makeatletter
\providecommand{\theHALG@line}{}
\renewcommand{\theHALG@line}{\thealgorithm.\arabic{ALG@line}}
\makeatother
\usepackage{cleveref}

\crefname{equation}{}{}
\crefname{proposition}{Proposition}{Propositions}
\crefname{appendix}{Appendix}{Appendices}
\crefname{assumption}{Assumption}{Assumptions}

\theoremstyle{plain}
\newtheorem{theorem}{Theorem}[section]
\newtheorem{proposition}[theorem]{Proposition}
\newtheorem{lemma}[theorem]{Lemma}

\theoremstyle{definition}

\newtheorem{example}[theorem]{Example}
\newtheorem{assumption}[theorem]{Assumption}
\theoremstyle{remark}

\makeatletter
\def\ddefloop#1{\ifx\ddefloop#1\else\ddef{#1}\expandafter\ddefloop\fi}
\def\ddef#1{\expandafter\def\csname bb#1\endcsname{\ensuremath{\mathbb{#1}}}}
\ddefloop ABCDEFGHIJKLMNOPQRSTUVWXYZ\ddefloop
\def\ddefloop#1{\ifx\ddefloop#1\else\ddef{#1}\expandafter\ddefloop\fi}
\def\ddef#1{\expandafter\def\csname b#1\endcsname{\ensuremath{\mathbf{#1}}}}
\ddefloop ABCDEFGHIJKLMNOPQRSTUVWXYZ\ddefloop
\def\ddef#1{\expandafter\def\csname c#1\endcsname{\ensuremath{\mathcal{#1}}}}
\ddefloop ABCDEFGHIJKLMNOPQRSTUVWXYZ\ddefloop
\def\ddef#1{\expandafter\def\csname h#1\endcsname{\ensuremath{\widehat{#1}}}}
\ddefloop ABCDEFGHIJKLMNOPQRSTUVWXYZ\ddefloop
\def\ddef#1{\expandafter\def\csname hc#1\endcsname{\ensuremath{\widehat{\mathcal{#1}}}}}
\ddefloop ABCDEFGHIJKLMNOPQRSTUVWXYZ\ddefloop
\def\ddef#1{\expandafter\def\csname t#1\endcsname{\ensuremath{\widetilde{#1}}}}
\ddefloop ABCDEFGHIJKLMNOPQRSTUVWXYZ\ddefloop
\def\ddef#1{\expandafter\def\csname tc#1\endcsname{\ensuremath{\widetilde{\mathcal{#1}}}}}
\ddefloop ABCDEFGHIJKLMNOPQRSTUVWXYZ\ddefloop
\makeatother

\let\Pr\undefined

\DeclareMathOperator{\Pr}{Pr}
\DeclareMathOperator*{\argmin}{arg\,min} 
\DeclareMathOperator*{\argmax}{arg\,max}

\newcommand{\ind}{\mathbbm{1}}    

\newcommand{\veps}{\varepsilon}

\newcommand\norm[1]{\left\lVert#1\right\rVert}
\newcommand{\E}{\mathbb{E}}

\newcommand{\R}{\mathbb{R}}

\newcommand{\ba}{\begin{array}}
\newcommand{\ea}{\end{array}}
\newcommand{\bs}{\begin{align}\begin{split}\nonumber}
\newcommand{\bsnumber}{\begin{align}\begin{split}}
\newcommand{\es}{\end{split}\end{align}}

\newcommand{\opnorm}[1]{\norm{#1}_{op}} %
\newcommand{\fronorm}[1]{\norm{#1}_{F}} %

\def\balign#1\ealign{\begin{align}#1\end{align}}
\def\balignat#1\ealign{\begin{alignat}#1\end{alignat}}
\def\bitemize#1\eitemize{\begin{itemize}#1\end{itemize}}
\def\benumerate#1\eenumerate{\begin{enumerate}#1\end{enumerate}}
\def\balignt#1\ealignt{\begin{talign}#1\end{talign}}%
\newcommand{\qb}{\mathbf{q}}

\newcommand{\PP}{\mathbb{P}} 
\newcommand*{\thetab}{\bm{\theta}}
\newcommand*{\taub}{\bm{\tau}}

\newcommand{\savehyperref}[2]{\texorpdfstring{\hyperref[#1]{#2}}{#2}}

\title{Treatment Effect Modification by Latent Classes}
\author{Hui Lan \quad Nathan Kallus \quad Winston Chou\\[0.4em]Netflix}
\date{}

\begin{document}
\maketitle

\begin{abstract}
Many treatments have effects that plausibly vary by latent classes. For example, the effect of personalized video recommendations may be stronger for users who seek a new discovery than for users seeking to rewatch something familiar. In the context of user journeys on digital platforms, we can conceptualize user moods as dynamic latent states that lead to distinct action distributions. Because mood is unobserved, we infer it from a user's pre-treatment actions using a hidden Markov model (HMM), and then leverage the latent state as an interpretable treatment effect modifier when evaluating experiment results. Under a treatment-effect sufficiency assumption, each observable conditional effect is a known mixture of the latent effects, so state-specific effects are identified when state posteriors are sufficiently heterogeneous. We propose an estimator that is Neyman-orthogonal to errors in the fitted HMM, enabling standard influence-function confidence intervals. In a real-world large-scale A/B test of a new recommendation model at Netflix, our latent-state analysis reveals that heavy-engagement users show significantly positive effects across all treatment variants, with limited evidence of benefit among less-active and rewatch-oriented users.
\end{abstract}

\section{Introduction}

Estimating heterogeneous treatment effects (HTE) is essential for optimizing personalized interventions. While standard approaches typically condition on observable covariates, treatment effects in many real-world applications are fundamentally modified by unobserved latent classes. For example, despite an insignificant overall mortality effect, \citet{famous2017ards} identified distinct treatment responses across latent classes of acute respiratory distress syndrome. As another example, \citet{lyu2023estimating} found that educational interventions varied across latent student types. Beyond static classes, the relevant latent context often evolves over time. In clinical data, neurostimulation responses have been linked to transient seizure-risk states inferred via hidden Markov models (HMMs) \citep{chiang2021state}. Similarly, HMMs have been applied to track evolving buyer--seller relationships to evaluate state-dependent marketing strategies \citep{zhang2016dynamic}. Other applications of dynamic states include digital health \citep{nahumshani2018jitai,boruvka2018assessing} and student learning \citep{corbett1995knowledge}. These examples highlight the need to understand treatment effect heterogeneity with respect to latent classes and/or dynamic states.

This work is motivated by a large-scale industry application at Netflix. A core hypothesis driving this work is that users' behavior at Netflix ebbs and flows according to dynamic user moods. For instance, users can shift between a discovery state, where they are actively looking for new content, and a nostalgic state, where they default to familiar content that they have watched before. In turn, these states should shape a user's receptiveness to algorithmic recommendations.

We propose a two-step framework that decouples latent class learning from causal effect estimation.  Specifically, we train a latent class model on pre-treatment data to infer underlying states.  At Netflix, we train an HMM on pre-treatment user behavioral histories using an independent sample of users disjoint from the experimental cohort to infer dynamic latent states. Leveraging the resulting state posteriors, we can mathematically unmix the observable conditional effects to estimate the true average treatment effect (ATE) within each latent state. 

Our contributions are threefold:
\begin{itemize}
    \item \textbf{Framework}: We show how to identify state average treatment effects from posteriors supplied by an independent latent class model, and formally characterize the asymptotic bias inherent in naive ``hard'' assignment and ``soft'' posterior-weighting estimators. We further characterize what the estimator targets when the identifying assumptions are violated.
    \item \textbf{Orthogonal Estimation}: We adapt the classical two-step orthogonal score adjustment to posterior unmixing. By augmenting the moment equation with a correction term calculated from the HMM's likelihood score, our approach cancels first-order sensitivity to HMM fitting error and supports standard influence-function confidence intervals.
    \item \textbf{Large-Scale Empirical Application}: We apply this methodology to a real-world, large-scale A/B test at Netflix. Our latent-state analysis reveals that heavy-engagement users show significantly positive effects across treatments, with limited evidence of benefit among inactive or rewatch-oriented users.
\end{itemize}

\section{Related Works}
\paragraph{Heterogeneous Treatment Effects.} Estimating heterogeneous treatment effects is an important problem in causal inference.
\citet{athey2016recursive} and \citet{wager2018estimation} use causal trees and forests to learn the conditional average treatment effect (CATE) with respect to observed covariates. Another line of work proposes meta-learners that can use any machine-learning method to estimate the CATE
\citep{kunzel2019metalearners,curth2021nonparametric,nie2021quasi,kennedy2023towards, lan2025meta}. Many works also look at HTEs with respect to learned covariates, such as the propensity score \citep{xie2012estimating}, predicted baseline outcome or risk scores \citep{abadie2018endogenous,chalkou2021two}, and ML predictions of the CATE itself \citep{chernozhukov2025fisher}. We instead target treatment effects within latent classes that are not observed and so cannot be inferred from the outcome, treatment, and covariate data alone.

\paragraph{Latent Variables for Causal Identification.}
A separate literature uses latent-variable models to address unobserved
or imperfectly measured confounding in observational data through
proxy measurement models \citep{kuroki2014measurement},
deep generative models
\citep{louizos2017causal,zhang2021treatment,zhu2025causal},
matrix factorization \citep{kallus2018causal}, or
balancing over latent-confounder distributions
\citep{bennett2019policy,bennett2021offpolicy}.
Other approaches model latent mediators for front-door adjustment
\citep{xu2024causal}, or use factor models for shared confounding across
multiple causes \citep{wang2019blessings,wang2021proxy,miao2023identifying},
including longitudinal treatment assignments \citep{bica2020time}.
Causal identification in these settings requires substantive assumptions
beyond fitting the observed distribution
\citep{damour2019reflections,rissanen2021critical}.
We here assume treatment is randomized, immediately yielding identification of average effects.
We instead use latent variables to analyze effect heterogeneity.

\paragraph{Treatment Effects for Latent Subgroups.}
\emph{One-step approaches} jointly model classes and outcomes or treatment
in observational
\citep{bartolucci2016causal,kim2019estimating,lyu2023estimating} or
experimental \citep{Shahn2017mixture} data, so fitted classes can depend
on the outcome or treatment contrast. \emph{Stepwise approaches} first
fit a measurement model; bias-adjusted three-step methods then correct class misclassification when relating classes to external variables
\citep{bolck2004estimating,vermunt2010latent,bakk2013estimating}.
We similarly separate class learning from effect estimation, but learn
classes from pre-treatment covariates and unmix randomized treatment effects using posteriors latent class probabilities. Our weights are proportional to
Bolck--Croon--Hagenaars (BCH) weights
(for a detailed comparison, see Section~\ref{sec:identification}).

\paragraph{Latent Treatment Effects from Proxy Moments.}
Proxy-moment methods identify class-specific average treatment effects through conditional-independence and rank conditions of observed proxies and do not estimate unit-level class posteriors \citep{mazaheri2025synthetic,virk2026spectral}.
We instead measure effects within separately learned behavioral classes by posterior unmixing under treatment-effect sufficiency and posterior rank conditions (Section~\ref{sec:identification}).


\paragraph{Two-Step Estimation and Neyman Orthogonality.}
Our estimator fits the HMM by maximum likelihood before using its
filtered posteriors in an effect estimating equation. Classical two-step
theory adds a term involving the first-step likelihood score and the
sensitivity of the effect moment to HMM parameters to the influence
function \citep{newey1984method,murphy1985estimation,newey1994large}.
This local-insensitivity construction is related to Neyman's
\(C(\alpha)\) and modern orthogonal estimators
\citep{neyman1959optimal,chernozhukov2018}. We use the same score
adjustment in the target estimating equation
(Section~\ref{sec:orthogonal_correction}). Classical two-step inference leaves the estimate
unchanged and adjusts its variance, while our correction changes the estimate as the HMM is learned on a separate sample. To first order, the correction is equivalent to a one-step Newton update of the frozen HMM on target histories \citep{bickel1975one}, without rerunning the expectation-maximization (EM) algorithm.

\section{Identification of Latent-Class Treatment Effects}
\label{sec:identification}

For experimental units \(i=1,\ldots,n\), let \(H_i\) be the observed pre-treatment covariates, \(Z_i\in\{1,\ldots,K\}\) denote unit $i$'s latent class at assignment, \(A_i\in\{0,1\}\) randomized treatment with known probability \(e\in(0,1)\), and \(Y_i(a)\) its potential outcome.  In our context, we interpret $H_i$ as the history of user $i$ and $Z_i$ as the user's transient but unobserved state; we note that, in general, $H_i$ and $Z_i$ need not have this dynamic interpretation.

The state
posterior is the vector
\(\qb_0(H_i)=(q_{01}(H_i),\ldots,q_{0K}(H_i))^\top\), where
\(q_{0s}(H_i)=\PP(Z_i=s\mid H_i)\). Let \(\pi_s=\PP(Z_i=s)\) denote the prevalence of state $s$.
For states with positive
prevalence, we target the state average treatment effects
\begin{equation*}
  \taub_0=(\tau_{01},\ldots,\tau_{0K})^\top,
  \,
  \tau_{0s}=\E\{Y_i(1)-Y_i(0)\mid Z_i=s\}.
\end{equation*}
To identify the state-specific effects from the identified but potentially complex history-specific CATEs, we make the following sufficiency assumption.

\begin{assumption}[Treatment-effect sufficiency]\label{assum:effect_sufficiency}
For each state \(s\) with positive prevalence,
\(\E[Y_i(1)-Y_i(0)\mid Z_i=s,H_i]=\tau_{0s}\) almost surely
conditional on \(Z_i=s\).
\end{assumption}
This assumption says that, within a state, the remaining history does not modify the conditional average treatment effect.
Iterated expectation then gives the observable mixture
\begin{equation}
  \delta(H_i)=\qb_0(H_i)^\top\taub_0.
  \label{eqn:mixture}
\end{equation}
Let \(Q_0=\E\{\qb_0(H_i)\qb_0(H_i)^\top\}\). For a symmetric
matrix \(Q\), let \(\lambda_{\min}(Q)\) denote its smallest eigenvalues. If
\(\lambda_{\min}(Q_0)>0\), multiplying the mixture identity by
\(\qb_0(H_i)\) and taking expectations identifies the state effects:
\begin{equation*}
  \taub_0=Q_0^{-1}\E[\qb_0(H_i)\delta(H_i)].
\end{equation*}
This rank condition requires variation in the posterior vectors. 

\paragraph{Comparison with Prior Work} Similar to \cite{bolck2004estimating} and \cite{vermunt2010latent}, our approach is also stepwise but motivated by the mixture identity $\delta(H_i)= \qb_0(H_i)^\top\taub_0$, and we unmix the latent class effects by projecting the history-conditional effect onto the posterior vector. Although the motivations differ, the two formulations lead to proportional weights: our unmixing weights satisfy $\{Q_0^{-1} \qb_0(H_i)\}_s = w_s(H_i)/\pi_s$, where $w_s$ is the BCH weight under posterior assignment \citep{bolck2004estimating, vermunt2010latent}. The scale factor reflects different targets. For some target variable $F$ that is mean-independent of $H$ given $Z$, the BCH weights replace the class indicator, i.e. $\E[F\,w_s(H_i)]=\E[F\,\ind(Z_i=s)]$. Our weights target the conditional mean $\E[F\mid Z_i=s]$ directly.

We differ from this literature in three other aspects. First, we formally identify causal treatment effects within latent states. Second, our dynamic latent states are learned from an HMM fitted on pre-treatment histories on a separate sample, which allows for comparison of results across different downstream tasks as the same state definition is shared. Third, rather than only adjusting standard errors for first-stage uncertainty \citep{bakk2014relating}, we remove its first-order effect on the estimate (Section~\ref{sec:orthogonal_correction}).

\paragraph{Within-state Effect Variation.}
The constant-effect restriction in Assumption~\ref{assum:effect_sufficiency} can be relaxed for a prespecified,
low-dimensional pre-treatment summary \(X_i=x(H_i)\):
\begin{equation*}
  \E[Y_i(1)-Y_i(0)\mid Z_i=s,H_i]=\tau_s(X_i).
\end{equation*}
The observable effect then satisfies
\(\delta(H_i)=\sum_{s=1}^Kq_{0s}(H_i)\tau_s(X_i)\).
If \(Q_0(x)=\E[\qb_0(H_i)\qb_0(H_i)^\top\mid X_i=x]\)
is nonsingular for almost every \(x\) in the target population, conditional inversion identifies \((\tau_s(x))_{s=1}^K\).
Pre-treatment baseline activity or unit-level intensity summaries
are possible choices for \(X_i\). This extension requires conditional
posterior rank, which is stronger than the global rank condition above.
The bias and inference results in the next section use the constant-effect case.
Appendix~\ref{app:effect-sufficiency-violation} shows that unmodeled
within-state variation makes the same estimator target a posterior
projection of $\delta(H)$ and bounds its gap from the state ATEs.

\subsection{Bias of Naive Estimators}\label{sec:bias_naive}
For nonnegative weights \(w=(w_1,\ldots,w_n)\), define the weighted
difference-in-means estimator:
\[
\widehat\Delta_n(w)
:=\frac{\sum_i A_iw_iY_i}{\sum_i A_iw_i}
-\frac{\sum_i(1-A_i)w_iY_i}{\sum_i(1-A_i)w_i},
\]
when both denominators are positive. Let \(\widehat q_{ik}\) be the
fitted posterior coordinate and
\(\widehat I_{ik}=\ind\{k=\argmax_{s\le K}\widehat q_{is}\}\),
with a fixed tie rule. Hard assignment uses
\(\widehat\Delta_n(\widehat I_{ik})\); soft weighting uses
\(\widehat\Delta_n(\widehat q_{ik})\).

Under Assumption~\ref{assum:effect_sufficiency}, these contrasts
generally target mixtures even with the true posterior. Write
\(q_{ik}=q_{0k}(H_i)\) and
\(I_{ik}=\ind\{k=\argmax_{s\le K}q_{is}\}\). Define the oracle
versions by replacing \(\widehat q_{ik}\) and \(\widehat I_{ik}\) by
\(q_{ik}\) and \(I_{ik}\), respectively. They satisfy
\begin{align*}
  \widehat\tau_{k,\mathrm{or}}^{\,\mathrm{soft}}
  &\xrightarrow{p}
  \sum_{s=1}^K\frac{\E[q_{ik}q_{is}]}{\E[q_{ik}]}\tau_{0s},
  \\
  \widehat\tau_{k,\mathrm{or}}^{\,\mathrm{hard}}
  &\xrightarrow{p}
  \sum_{s=1}^K\PP(Z_i=s\mid I_{ik}=1)\tau_{0s}.
\end{align*}
Each set of weights sums to one, so the naive estimators recover convex sums of the true latent state aggregates. A derivation of the bias is in Section~\ref{app:naive-bias}.


\section{Estimation}
\label{sec:unmixing-estimation}
In our application, we estimate latent-state effects in a randomized experiment using an HMM
trained on behavioral histories from a separate set of units. Again, we emphasize that the method can also be applied in non-dynamic settings to infer static latent classes. Once
trained, the HMM can be reused across treatments and experiments with the same
pre-treatment history distribution, facilitating comparison. For each experiment, we obtain
posterior state probabilities for its analysis units, estimate the
state effects, and evaluate the orthogonal correction on those units. Our theory assumes that the number of latent states is correctly specified. 

\subsection{The Unmixing Estimator}
To estimate the state effects from \eqref{eqn:mixture}, write
\(\widetilde A_i=A_i-e\) and \(v_A=e(1-e)\).
By consistency, \(Y_i=Y_i(0)+A_i\{Y_i(1)-Y_i(0)\}\).
Randomization then gives
\begin{equation}
  \begin{aligned}
  &\E[\widetilde A_iY_i\mid H_i]\\
    &=\E[\widetilde A_i\mid H_i]\E[Y_i(0)\mid H_i]+\E[\widetilde A_iA_i\mid H_i]\delta(H_i)\\
    &=e(1-e)\delta(H_i)=v_A\delta(H_i)
  \end{aligned}
  \label{eq:unmix-randomization}
\end{equation}
Because the randomization probability \(e\) is known, this moment uses
\(Y_i\) directly; the outcome regression on the history
\(m_0(H)=\E[Y_i\mid H_i=H]\) need not be estimated.
Let \(\eta\) denote the HMM parameter, with true value \(\eta_0\),
and write \(\qb_\eta(H)\) for its filtered state-probability vector
based only on pre-treatment history. After label alignment,
\(\qb_{\eta_0}(H)=\qb_0(H)\).
For a coefficient vector \(\taub\in\R^K\), define
\begin{equation*}
  \bm{\psi}_i(\taub,\eta)
  =\qb_\eta(H_i)\widetilde A_i
   \{Y_i-\widetilde A_i\qb_\eta(H_i)^\top\taub\}.
\end{equation*}
By \eqref{eq:unmix-randomization}, its population mean is
\begin{equation}
  \E\{\bm{\psi}_i(\taub,\eta)\}
  =v_A\E\!\left[\qb_\eta(H_i)
      \{\delta(H_i)-\qb_\eta(H_i)^\top\taub\}\right].
  \label{eq:unmix-population-moment}
\end{equation}
By \eqref{eqn:mixture}, this mean vanishes at
\((\taub_0,\eta_0)\).

Fit \(\widehat\eta\) on the separate HMM training sample. Write \(\widehat{\qb}_i=\qb_{\widehat\eta}(H_i)\).
Let \(\widehat{\taub}^{\,\mathrm{pl}}\) denote the plain plug-in
unmixing estimator, obtained by solving the sample effect equation.
When the sample Gram matrix is invertible, it is
\begin{align}
  \widehat M&=\frac1n\sum_{i=1}^n
     \widetilde A_i^2\widehat{\qb}_i
     \widehat{\qb}_i^\top,\nonumber\\
  \widehat{\taub}^{\,\mathrm{pl}}
  &=\widehat M^{-1}\frac1n\sum_{i=1}^n
     \widehat{\qb}_i\widetilde A_iY_i.
  \label{eq:unmix-plugin}
\end{align}
The population Gram matrix is \(M_0=v_AQ_0\).

\subsection{Orthogonal Correction and Inference}\label{sec:orthogonal_correction}
Estimation of the external HMM can shift the plug-in effect equation
at first order. Let \(p_\eta(H)\) be the HMM likelihood of the
pre-treatment history, with \(\eta\) in identifiable free coordinates,
and let \(D_\eta\) denote differentiation with respect to \(\eta\).
We use the observed-history score
\(\bm{s}^H(H;\eta)=D_\eta\log p_\eta(H)\). Define
\begin{align}
  G_0&=\E[D_\eta\bm{\psi}_i(\taub_0,\eta)|_{\eta_0}],\nonumber\\
  J_0&=\E[D_\eta\bm{s}^H(H_i;\eta_0)],
    \nonumber\\
  B_0&=G_0J_0^{-1}.
  \label{eq:unmix-hmm-sensitivity}
\end{align}
\(G_0\) and \(J_0\) measure how the effect moment and mean history
score change with \(\eta\).
For a conformable matrix \(B\), define the corrected score
\begin{equation*}
  \bm{\phi}_i(\taub,\eta,B)
  =\bm{\psi}_i(\taub,\eta)-B\bm{s}^H(H_i;\eta).
\end{equation*}
At \(\eta_0\), the history score has mean zero, so this subtraction
preserves the target. Its mean is also insensitive to small HMM
parameter errors: under the local conditions below, its derivative
with respect to \(\eta\) at \(\eta_0\) is
\begin{equation*}
  G_0-B_0J_0=\mathbf 0.
\end{equation*}
HMM fitting error therefore enters the mean corrected score
quadratically, together with a product of the errors in
\(\widehat\eta\) and \(\widehat B\).

The expectations defining \(B_0\) are over the target-unit
distribution, so we estimate them on the experiment units. Denote
\(\E_n [f]=n^{-1}\sum_{i=1}^n f(W_i)\). Using
\(\widehat{\taub}^{\,\mathrm{pl}}\) as a preliminary value and holding
it fixed in \(D_\eta\bm{\psi}\), estimate \(B_0\) by
\begin{align}
  \widehat B
  &=\E_n \bigl[D_\eta\bm{\psi}  (\widehat{\taub}^{\,\mathrm{pl}},\widehat\eta)\bigr] \times \E_n[D_\eta\bm{s}^H(H;\widehat\eta)]^{-1}.
\label{eq:unmix-sensitivity-estimator}
\end{align}

The corrected estimator evaluates the HMM score on the target analysis
units:
\begin{equation}
  \widehat{\taub}^{\,\perp}
  =\widehat{\taub}^{\,\mathrm{pl}}
   -\widehat M^{-1}\widehat B
       \E_n [\bm{s}^H(H;\widehat\eta)].
  \label{eq:unmix-orthogonal-estimator}
\end{equation}
At an exact interior likelihood fit, the training sample's mean HMM score is
zero. The target sample's mean score is generally nonzero and cancels
the first-order effect of estimating \(\eta_0\).
For each target unit, define
\(\widehat{\bm{\phi}}_i
  =\bm{\phi}_i(\widehat{\taub}^{\,\perp},
               \widehat\eta,\widehat B)\).
The sandwich covariance estimator is
\begin{align}
  \widehat\Omega
  &=\frac1n\sum_i
     \widehat{\bm{\phi}}_i\widehat{\bm{\phi}}_i^\top,
  \nonumber\\
  \widehat{\operatorname{Var}}
       (\widehat{\taub}^{\,\perp})
  &=\frac1n\widehat M^{-1}\widehat\Omega\widehat M^{-1}.
  \label{eq:unmix-sandwich}
\end{align}

\begin{theorem}[Orthogonal inference with an external HMM]
\label{thm:orthogonal-unmixing}
Let the analysis units be i.i.d., independent of i.i.d. HMM
training histories drawn from the same distribution as \(H_i\). Suppose
\(A_i\sim\operatorname{Bernoulli}(e)\), independently of
\((H_i,Y_i(0),Y_i(1))\), with known \(e\in(0,1)\). Assume observed
outcomes are consistent and
\(\E\{Y_i(a)^2\}<\infty\) for \(a\in\{0,1\}\). Assume the HMM is
correctly specified with fixed \(K\) and a fixed-dimensional interior
parameter \(\eta_0\), after a label alignment based only on its
training histories. Suppose that Assumption ~\ref{assum:effect_sufficiency} holds,
\(\lambda_{\min}(Q_0)\geq\kappa_0>0\),
\(\E\{\bm{s}^H(H_i;\eta_0)\}=\mathbf 0\),
\(\E\big[\norm{\bm{s}^H(H_i;\eta_0)}^2\big]<\infty\), and nonsingular \(J_0\).

Assume the local \(L_2\) continuity and
second-order mean expansions in
Assumption~\ref{ass:unmix-local-regularity}. Suppose the externally
fitted \(\widehat\eta\) and the target-sample \(\widehat B\) satisfy
\begin{equation*}
  \begin{aligned}
  \norm{\widehat\eta-\eta_0}&=O_p(r_\eta),\\
  \opnorm{\widehat B-B_0}&=O_p(r_B),
  \end{aligned}
\end{equation*}
for rates \(r_\eta,r_B\to0\).
Denote \(\bm{\phi}_{0i}
  =\bm{\phi}_i(\taub_0,\eta_0,B_0)\).
Then HMM fitting error contributes no linear term:
\begin{equation}
  \begin{aligned}
  \widehat{\taub}^{\,\perp}-\taub_0
  &=M_0^{-1}\frac1n\sum_{i=1}^n\bm{\phi}_{0i}\\
  &\quad+O_p(r_\eta^2+r_\eta r_B)+o_p(n^{-1/2}).
  \end{aligned}
  \label{eq:unmix-orthogonal-expansion}
\end{equation}
Consequently,
\begin{equation}
  \norm{\widehat{\taub}^{\,\perp}-\taub_0}
  =O_p\!\left(
    \frac{n^{-1/2}+r_\eta^2+r_\eta r_B}{v_A\kappa_0}
    \right).
  \label{eq:unmix-rate}
\end{equation}
If \(\sqrt n(r_\eta^2+r_\eta r_B)\to0\), put
\(\Omega_0=\E\{\bm{\phi}_{0i}\bm{\phi}_{0i}^\top\}\). Then
\begin{equation}
  \sqrt n(\widehat{\taub}^{\,\perp}-\taub_0)
  \xrightarrow{d}\mathcal N
      (\mathbf 0,M_0^{-1}\Omega_0M_0^{-1}).
  \label{eq:unmix-clt}
\end{equation}
Under the same condition, \(n\) times the sandwich covariance in
\eqref{eq:unmix-sandwich} converges in probability to the displayed
asymptotic covariance.
\end{theorem}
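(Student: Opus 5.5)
The plan is to write the corrected estimator as the solution of a linear corrected estimating equation and then expand it around \((\taub_0,\eta_0,B_0)\). By \eqref{eq:unmix-plugin} and \eqref{eq:unmix-orthogonal-estimator}, \(\widehat M(\widehat{\taub}^{\,\perp}-\taub_0)=\E_n[\bm{\psi}_i(\taub_0,\widehat\eta)]-\widehat B\,\E_n[\bm{s}^H(H;\widehat\eta)]=\E_n[\bm{\phi}_i(\taub_0,\widehat\eta,\widehat B)]\). This holds because \(\bm{\psi}_i\) is affine in \(\taub\) with Jacobian \(-\widetilde A_i^2\widehat{\qb}_i\widehat{\qb}_i^\top\). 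The theorem therefore reduces to two things: showing \(\widehat M\to_p M_0\), and expanding \(\E_n[\bm{\phi}_i(\taub_0,\widehat\eta,\widehat B)]\).

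For the Gram matrix, the entries of \(\qb_\eta\) lie in \([0,1]\), so the law of large numbers gives \(\E_n[\widetilde A_i^2\qb_{\eta_0}\qb_{\eta_0}^\top]\to_p v_AQ_0\). The \(L_2\) continuity in Assumption~\ref{ass:unmix-local-regularity}, combined with \(\widehat\eta\to_p\eta_0\) and the independence of \(\widehat\eta\) from the analysis sample, makes the difference from \(\widehat M\) \(o_p(1)\). Since \(\lambda_{\min}(M_0)\ge v_A\kappa_0\), we get \(\widehat M^{-1}=M_0^{-1}+o_p(1)\) with \(\opnorm{M_0^{-1}}\le (v_A\kappa_0)^{-1}\). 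This bound produces the denominator in \eqref{eq:unmix-rate}.

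For the main term, I would condition on the training sample, which makes \(\widehat\eta\) fixed. I then split
\[
\E_n[\bm{\phi}_i(\taub_0,\widehat\eta,\widehat B)]=\E_n[\bm{\phi}_{0i}]+(\E_n-\E)[\bm{\phi}_i(\taub_0,\widehat\eta,B_0)-\bm{\phi}_{0i}]+\E[\bm{\phi}_i(\taub_0,\widehat\eta,B_0)]-(\widehat B-B_0)\E_n[\bm{s}^H(H;\widehat\eta)].
\]
\begin{itemize}
\item The empirical-process term is \(o_p(n^{-1/2})\). Conditionally on the training sample it has mean zero and variance \(n^{-1}\E\norm{\bm{\phi}_i(\taub_0,\widehat\eta,B_0)-\bm{\phi}_{0i}}^2\), which tends to zero by the \(L_2\) continuity. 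This is where sample splitting replaces any Donsker condition.
\item The mean term is handled by the second-order expansion. Its value at \(\eta_0\) is zero, because \eqref{eqn:mixture} and \eqref{eq:unmix-population-moment} give a zero mean for \(\bm\psi\) and the score has mean zero. Its derivative is \(G_0-B_0J_0=\mathbf 0\). Hence it is \(O_p(r_\eta^2)\).
\item For the last term, I write \(\E_n[\bm{s}^H(H;\widehat\eta)]=\E_n[\bm{s}^H(H;\eta_0)]+J_0(\widehat\eta-\eta_0)+O_p(r_\eta^2)+o_p(n^{-1/2})\), again using the expansion plus the same conditional variance argument. The first piece is \(O_p(n^{-1/2})\) by the CLT. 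Multiplying by \(\widehat B-B_0\) gives \(O_p(r_Bn^{-1/2}+r_Br_\eta)\), and \(r_Bn^{-1/2}=o_p(n^{-1/2})\).
\end{itemize}
Replacing \(\widehat M^{-1}\) by \(M_0^{-1}\) costs \(o_p(1)\) times \(O_p(n^{-1/2}+r_\eta^2+r_\eta r_B)\). Collecting terms yields \eqref{eq:unmix-orthogonal-expansion} and then \eqref{eq:unmix-rate}.

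The CLT \eqref{eq:unmix-clt} follows from the multivariate Lindeberg--L\'evy theorem applied to \(\bm{\phi}_{0i}\), which is i.i.d. and mean zero. It has finite second moments because \(\qb\) is bounded, \(\E Y(a)^2<\infty\), and the score is square integrable. For the sandwich estimator, I would show \(\E_n\norm{\widehat{\bm\phi}_i-\bm\phi_{0i}}^2=o_p(1)\). I would bound the difference by the \(\taub\)-error times \(\norm{\qb}^2\le1\), plus the \(\eta\)-perturbation (via \(L_2\) continuity and conditioning), plus \(\opnorm{\widehat B-B_0}\norm{\bm{s}^H}\). Then \(\widehat\Omega\to_p\Omega_0\) follows via Cauchy--Schwarz, and Slutsky handles \(\widehat M^{-1}\).

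The main obstacle is the conditioning step. The statement's rates are unconditional, while the empirical-process and expansion bounds are conditional on \(\widehat\eta\). One must check that \(O_p\) and \(o_p\) statements transfer, which works via the lemma that conditional convergence in probability implies unconditional. One must also check that \(\widehat B\), which depends on the analysis sample, enters only through the product term, so that no independence of \(\widehat B\) is required.
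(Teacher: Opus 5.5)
Your proposal is correct and follows essentially the same route as the paper's proof. Both arguments rest on the exact identity $\widehat M(\widehat{\taub}^{\,\perp}-\taub_0)=\E_n[\bm{\phi}_i(\taub_0,\widehat\eta,\widehat B)]$ and split the right side the same way: the conditional-Chebyshev fluctuation of $\bm{\phi}(\taub_0,\widehat\eta,B_0)-\bm{\phi}_{0}$, the quadratic mean remainder from $G_0-B_0J_0=\mathbf 0$, and the product term $(\widehat B-B_0)\E_n[\bm{s}^H(H;\widehat\eta)]$. These are then combined with the Weyl/Slutsky step and the same sandwich argument.

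The differences are cosmetic. You expand $\E_n[\bm{s}^H(H;\widehat\eta)]$ around $\E_n[\bm{s}^H(H;\eta_0)]$, whereas the paper bounds its mean and its fluctuation separately. You also make explicit the $\widehat M^{-1}\to M_0^{-1}$ replacement, which the paper leaves implicit.
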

Appendix~\ref{app:orthogonal-unmixing-proof} first verifies the
target-sample correction rate in
Lemma~\ref{lem:unmix-sensitivity-estimation}, then proves the theorem.

For each state, an approximate 95\% confidence interval is the
corrected effect estimate plus or minus 1.96 times its sandwich
standard error from \eqref{eq:unmix-sandwich}. This normal
approximation requires a positive limiting variance. Weak posterior
rank can inflate the standard error through \(M_0^{-1}\).

\section{Simulation experiments}
\label{sec:simulations}

We use two designs to separate ambiguity about a latent state from error
in estimating the state model. In the oracle-posterior design, every
method receives the exact posterior \(\qb_0(H_i)\), which isolates the
bias of hard assignment and posterior weighting. In the full workflow,
we fit an HMM on histories from separate units and compare plain
unmixing with the orthogonal correction on a randomized target sample.
All experiments use known treatment probability \(e=1/2\);
the centered outcome model has \(\E(Y_i\mid H_i)=0\).
Appendix~\ref{app:simulations} gives the designs and additional results.

\subsection{Oracle Posteriors: Bias and Identification}

We first use three equally prevalent states with
\(\taub_0=(-1,0,1)^\top\). A categorical pre-treatment signal
\(H_i\in\{1,2,3\}\) equals the state \(Z_i\) with probability
\(\rho=\PP(H_i=Z_i)\in\{0.40,0.60,0.80,0.95,1.00\}\); under the
balanced prior, \(\rho\) is hard-assignment accuracy. With independent
standard normal noise \(\veps_i\), outcomes follow
\(Y_i=(A_i-1/2)\tau_{0,Z_i}+\veps_i\). We use
\(n\in\{500,2{,}000,10{,}000,50{,}000\}\) units and 1,000
replications per condition. No HMM is estimated in this design.

Even with the exact posterior, hard and soft estimators target mixtures
of the state effects. Table~\ref{tab:overlap-results} presents the
root mean squared error (RMSE) of the three estimators as signal
accuracy varies at \(n=50{,}000\). All methods become less precise as
accuracy falls, but unmixing has lower RMSE than hard or soft weighting
in this design. Its intervals have near-nominal coverage of the true
state effects, whereas the naive intervals often undercover. When
\(\rho=1\), the posterior probability exactly identifies the latent state, and the three methods agree up to finite-sample differences. Appendix
Tables~\ref{tab:overlap-rmse-full} and
\ref{tab:overlap-coverage-full} report all sample sizes.

\begin{table}[t]
\centering
\caption{\small Posterior-overlap results at \(n=50{,}000\), based on 1,000
replications. Signal accuracy is \(\rho=\PP(H_i=Z_i)\). Entries are
estimates (Monte Carlo standard errors).
RMSE weights the three states equally; coverage averages across states
for the true latent-state effects. Posterior-based methods use the
exact Bayes posterior.}
\label{tab:overlap-results}
\scriptsize
\setlength{\tabcolsep}{2pt}
\begin{tabular}{@{}lccc@{}}
\toprule
\multicolumn{4}{l}{\textbf{RMSE \(\downarrow\)}} \\
\cmidrule(lr){1-4}
\(\rho\) & Unmixing & Hard & Soft \\
\midrule
0.40 & 0.1393 (0.0021) & 0.7349 (0.00032) & 0.8084 (0.000032) \\
0.60 & 0.0347 (0.00051) & 0.4897 (0.00029) & 0.6858 (0.00012) \\
0.80 & 0.0214 (0.00028) & 0.2453 (0.00030) & 0.4165 (0.00021) \\
0.95 & 0.0160 (0.00022) & 0.0633 (0.00027) & 0.1189 (0.00025) \\
1.00 & 0.0154 (0.00019) & 0.0154 (0.00019) & 0.0154 (0.00019) \\
\addlinespace[6pt]
\multicolumn{4}{l}{\textbf{True-effect 95\% coverage}} \\
\cmidrule(lr){1-4}
\(\rho\) & Unmixing & Hard & Soft \\
\midrule
0.40 & 0.951 (0.0044) & 0.317 (0.0023) & 0.318 (0.0022) \\
0.60 & 0.953 (0.0042) & 0.317 (0.0023) & 0.316 (0.0023) \\
0.80 & 0.948 (0.0043) & 0.318 (0.0022) & 0.319 (0.0022) \\
0.95 & 0.959 (0.0036) & 0.315 (0.0025) & 0.314 (0.0025) \\
1.00 & 0.948 (0.0040) & 0.948 (0.0040) & 0.948 (0.0040) \\
\bottomrule
\end{tabular}
\end{table}

State imbalance also weakens identification:
at \(\rho=0.80\), reducing the smallest state prevalence from \(1/3\)
to 0.10 raises the Gram condition number from 2.04 to 18.12 and unmixing's equal-state RMSE from 0.021 to 0.037. Appendix~\ref{app:sim-imbalance} reports the imbalance results; the unmixing estimator consistently outperforms the naive estimators. 

\subsection{Estimated HMM: Orthogonal Correction}

We next generate 21-day action and viewing-intensity histories from a
correctly specified five-state HMM and fit it by EM on independent panels
from the same history distribution, with
\(N\in\{300,600,1{,}200,2{,}400\}\) units. In each of 200 paired
replications, a randomized 50,000-unit target panel is analyzed at
sample-sizes of \(n\in\{2{,}000,10{,}000,50{,}000\}\).
The state effects, ordered by activity, are
\(\taub_0=(-1.2,-0.6,0,0.6,1.2)^\top\).

The HMM fit uses a separate set of units with behavioral panel, and the correction's posterior derivatives, score information, and effect moments are calculated on the target experiment sample. In the implementation of the orthogonal correction term, the target score Jacobian is approximated by the outer product of observed-history scores (OPG). Under correct HMM specification, the information-matrix identity equates the target expected score Jacobian to the negative expected score outer product, motivating the OPG approximation \citep{berndt1974estimation} without repeated numerical differentiation. Appendix~\ref{app:sim-hmm} describes the HMM, orthogonal correction term, and additional effect profiles.

Figure~\ref{fig:orthogonal-opg-grid} shows that the RMSE gain grows
with target size \(n\) at every HMM training size \(N\), and generally
shrinks as \(N\) grows. In the primary comparison pooling \(N=600\)
and \(N=1{,}200\) at \(n=50{,}000\), orthogonal correction reduces
equal-state mean squared error (MSE), averaged over replications, by
31.4\%. At \(N=600\), empirical
interval coverage rises from 85.9\% to 93.0\%. The sample-size pattern
is consistent with correction attenuating HMM fitting error: with
fewer training histories there is more such error to remove, while a
larger target sample makes it more visible relative to sampling noise.

Theorem~\ref{thm:orthogonal-unmixing} helps explain this pattern. Let
\(r_\eta\) denote the HMM estimation rate, which depends on the number
\(N\) of training histories. Plain unmixing has error
\(O_p(n^{-1/2}+r_\eta)\). Under the conditions of Lemma~\ref{lem:unmix-sensitivity-estimation},
orthogonal unmixing reduces that contribution to second order, giving
\(O_p(n^{-1/2}+r_\eta^2)\). Correction can therefore help more when the
HMM is trained on fewer histories. As the target sample grows, sampling
noise falls and the correction coefficient is estimated more accurately, which makes the gain easier to detect. Appendix~\ref{app:sim-hmm-results}
reports the complete $(n,N)$ and effect-profile results.

\begin{figure}
\centering
\includegraphics[width=0.6\columnwidth]{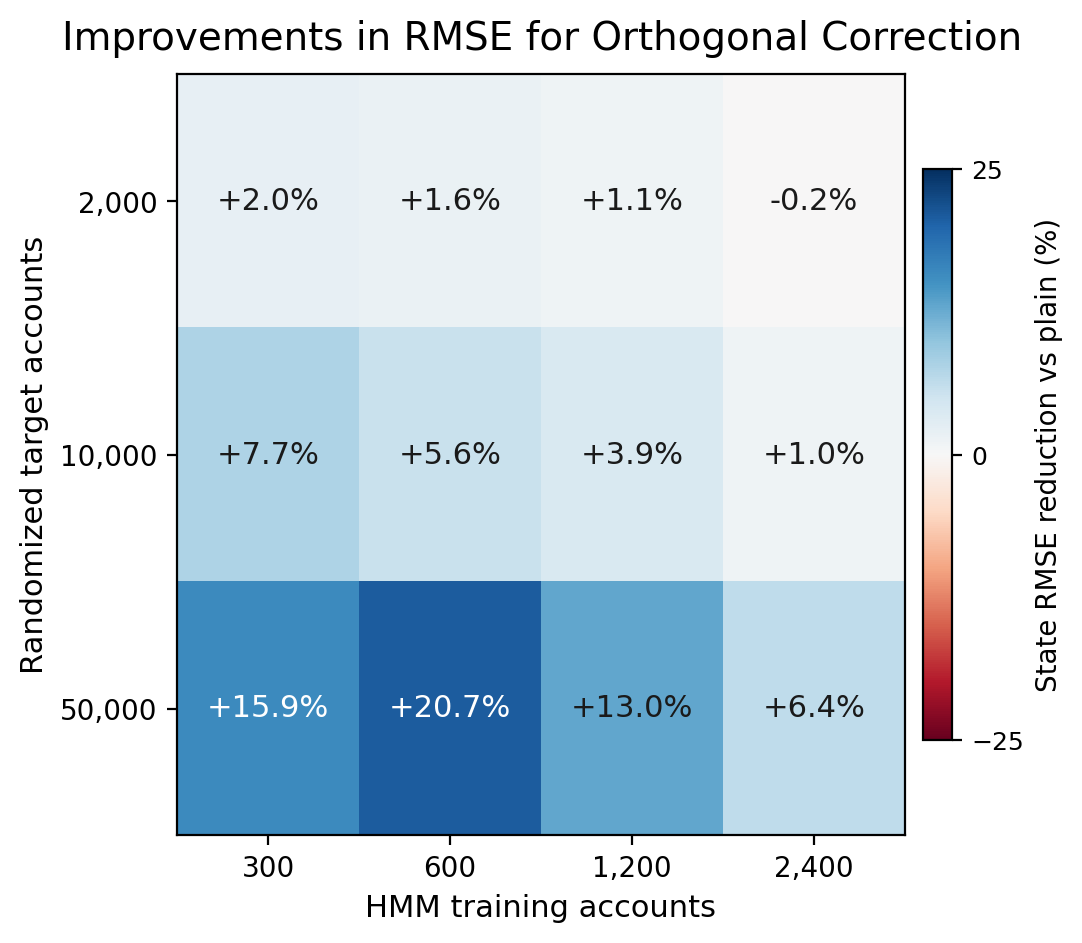}
\caption{\small Equal-state RMSE reduction from target-sample OPG
correction relative to plain unmixing under monotone effects. Each cell
uses 200 paired replications; positive values favor correction.}
\label{fig:orthogonal-opg-grid}
\end{figure}

\begin{figure*}[t]
\centering
\resizebox{0.75\textwidth}{!}{
\begin{tikzpicture}[
  x=1cm,y=1cm,>=Latex,
  every node/.style={font=\footnotesize},
  state/.style={draw=blue!65!black,fill=blue!8,rounded corners=2pt,
    minimum width=1.95cm,minimum height=0.68cm,align=center,line width=0.7pt},
  observed/.style={draw=green!45!black,fill=green!7,rounded corners=2pt,
    minimum width=1.95cm,minimum height=0.68cm,align=center,line width=0.7pt},
  layer/.style={draw=green!45!black,fill=green!7,rounded corners=2pt,
    minimum width=2.05cm,minimum height=0.82cm,align=center,line width=0.7pt},
  output/.style={draw=blue!65!black,fill=blue!5,rounded corners=2pt,
    text width=6.45cm,minimum height=0.88cm,align=center,line width=0.7pt},
  link/.style={->,draw=black!70,line width=0.8pt}
]
  \draw[draw=black!22,fill=black!1,rounded corners=3pt]
    (0,0.8) rectangle (7.6,4.75);
  \node[anchor=west,font=\footnotesize\bfseries] at (0.35,4.38)
    {Across pre-treatment days};
  \node[state] (zprev) at (1.25,3.35) {$Z_{i,t-1}$};
  \node[state] (zmid)  at (3.80,3.35) {$Z_{it}$};
  \node[state] (znext) at (6.35,3.35) {$Z_{i,t+1}$};
  \draw[link] (zprev.east) -- (zmid.west);
  \draw[link] (zmid.east) -- (znext.west);
  \node[observed] (oprev) at (1.25,2.05) {$O_{i,t-1}$};
  \node[observed] (omid)  at (3.80,2.05) {$O_{it}$};
  \node[observed] (onext) at (6.35,2.05) {$O_{i,t+1}$};
  \draw[link] (zprev.south) -- (oprev.north);
  \draw[link] (zmid.south) -- (omid.north);
  \draw[link] (znext.south) -- (onext.north);
  \node[font=\scriptsize,text=black!65] at (3.80,1.17)
    {Each latent state shapes that day's observations};

  \draw[draw=black!22,fill=black!1,rounded corners=3pt]
    (8.0,0.8) rectangle (16.0,4.75);
  \node[anchor=west,font=\footnotesize\bfseries] at (8.35,4.38)
    {One day's observation model};
  \node[state] (zday) at (12.0,3.35) {$Z_{it}$};
  \node[layer] (activity) at (9.25,2.05) {$R_{it}$\\activity};
  \node[layer] (action) at (12.0,2.05) {$C_{it}$\\action};
  \node[layer] (intensity) at (14.75,2.05) {$B_{it}$\\intensity};
  \draw[link,draw=blue!65!black] (zday.south west) -- (activity.north);
  \draw[link,draw=blue!65!black] (zday.south) -- (action.north);
  \draw[link,draw=blue!65!black] (zday.south east) -- (intensity.north);
  \draw[link] (activity.east) -- (action.west);
  \draw[link] (action.east) -- (intensity.west);
  \node[font=\scriptsize,text=black!65] at (12.0,1.23)
    {$C_{it}$ and $B_{it}$ are recorded only if $R_{it}=1$};
  \node[font=\scriptsize,text=black!65] at (12.0,0.97)
    {Observed inputs: $u_{it},G_{ita},v_{it}$, and weekday $w(t)$};

  \node[output] (history) at (3.80,-0.15)
    {Observed pre-treatment history\\[-1pt]
     $H_i=(O_{i1},\ldots,O_{iT})$};
  \node[output] (posterior) at (12.0,-0.15)
    {Filtered posterior at assignment\\[-1pt]
     $\widehat{\qb}_i=\Pr_{\widehat\eta}(Z_{iT}=\cdot\mid H_i)$};
  \draw[link] (history.east) --
    node[midway,above,font=\scriptsize] {filter} (posterior.west);
\end{tikzpicture}}
\caption{\small Schematic of the user behavior HMM over \(T\)
pre-treatment days. \(Z_{it}\) is user \(i\)'s latent state and
\(O_{it}\) is the observed record on day \(t\). Activity \(R_{it}\)
is followed by action \(C_{it}\) when active and viewing intensity
\(B_{it}\) conditional on action. The inputs \(u_{it},G_{ita},v_{it}\)
are user summaries, and \(w(t)\) is weekday.}
\label{fig:netflix-hmm-schematic}
\end{figure*}

\section{Real World Case Study}
To demonstrate the practical utility of our latent-class HTE estimator, we apply our methodology to evaluate treatment effect heterogeneity due to dynamic latent states in a large-scale randomized A/B test conducted at Netflix.

Personalized recommendation systems are core to improving user engagement at Netflix. However, the impact of these recommendations is plausibly heterogeneous and modified by transient user moods. For example, a user who has recently finished a TV series may be highly receptive to recommendations of novel titles, whereas a user actively progressing through a multi-season show or repeatedly rewatching familiar content is likely to continue their current viewing pattern regardless of recommendations. Because these underlying moods are not directly observed, we characterize them in the above framework as transient latent states that generate different conditional probability distributions over a user’s subsequent actions.

\subsection{Behavioral-state Hidden Markov Model}

Figure~\ref{fig:netflix-hmm-schematic} illustrates the user behavior HMM.
Let \(T\) be the number of pre-treatment days and \(O_{it}\) the observed
record on day \(t\), so that \(H_i=(O_{i1},\ldots,O_{iT})\).
Let \(Z_{it}\) be user \(i\)'s state on pre-treatment day \(t\),
\(R_{it}\) indicate activity, \(C_{it}\) denote the action when active,
and \(B_{it}\) denote a viewing-duration bin. Write \(w(t)\) for the
weekday; \(s,r\) index states, \(a\) indexes actions, and \(b\)
indexes duration bins. The transition matrix is \(A=(A_{sr})\).
Conditional on activity, actions include start, continue, complete,
mixed start-and-complete, and rewatch. The emission model uses fixed
user offsets \(u_{it},G_{ita},v_{it}\), state terms
\(\theta_s,\phi_{sa},\kappa_s\), weekday terms
\(d_{w(t)},\delta_{w(t),a},\psi_{w(t),a,b}\), coefficients
\(\lambda,\beta\), and a numerical score \(z_b\) for duration bin
\(b\):

{\small
\begin{align*}
\PP(Z_{i,t+1}=r\mid Z_{it}=s)&=A_{sr}, \\
\operatorname{logit}\big(\Pr(R_{it}=1\mid Z_{it}=s) \big)&=u_{it}+\theta_s+d_{w(t)}, \\
\PP(C_{it}=a\mid R_{it}=1,Z_{it}=s)&\propto\phi_{sa}e^{\lambda G_{ita}+\delta_{w(t),a}}, \\
\PP(B_{it}=b\mid R_{it}=1,C_{it}=a,Z_{it}=s) &\propto\psi_{w(t),a,b}e^{z_b(\beta v_{it}+\kappa_s)}.
\end{align*}
}

The offsets summarize trailing viewing intensity and action mix, so
state reflects behavior relative to a user's own baseline. We estimate
the HMM's initial-state distribution \(\bm{\nu}\), transition matrix
\(A\), and emission parameters by Baum--Welch EM on a disjoint fold from the treatment effect estimation.

\begin{figure*}[t]
\centering
\includegraphics[width=0.85\textwidth]{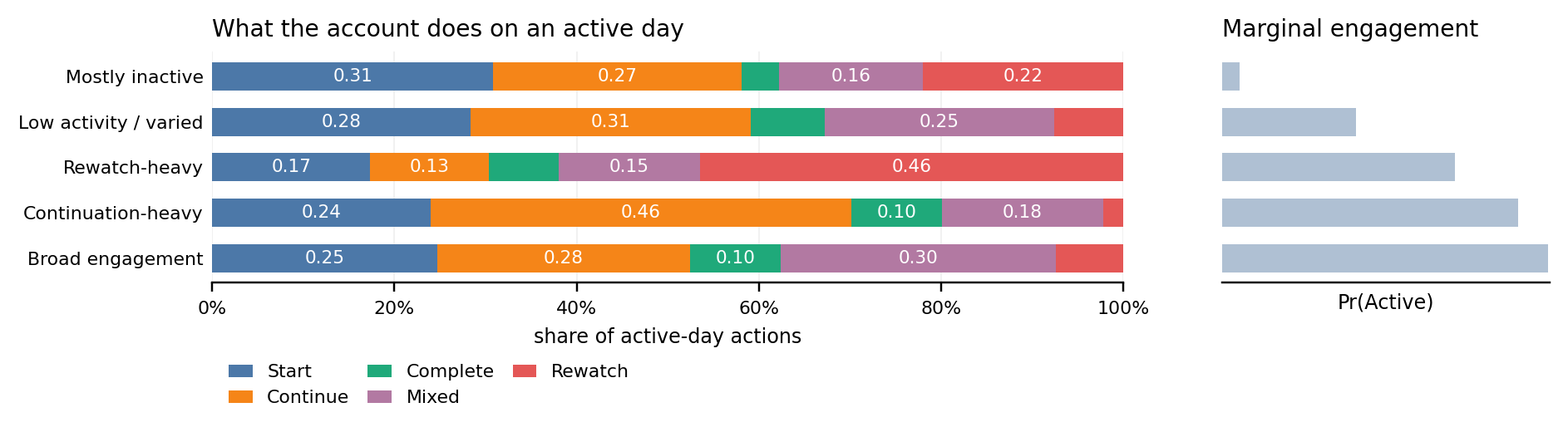}
\caption{\small Behavioral profiles of the fitted five-state HMM. The left panel shows action shares conditional on activity; the right panel shows marginal activity probabilities. The states describe temporary behavioral contexts, not permanent member types.}
\label{fig:state_action}
\end{figure*}

Figure~\ref{fig:state_action} shows the fitted states' activity and
action profiles. Their viewing-intensity distributions conditional on
playback are in Appendix Figure~\ref{fig:state-intensity}.

We choose the number of states using an elbow in out-of-time held-out
log-likelihood per observed day for candidate HMMs with
\(K=2,\ldots,7\) on a smaller sample for hyperparameter selection (Appendix Figure~\ref{fig:hmm-k-selection}). The gain from
4 to 5 states is \(0.00721\), whereas the gain from 5 to 6
falls to \(0.00178\), and adding a $7^{th}$ state gives a similarly small
gain of \(0.00197\). We therefore use \(K=5\).
Appendix Figure~\ref{fig:hmm-state-stability} compares state posteriors
across matched \(K=4,5,6\) states.

\subsection{Latent State HTEs}

We apply the proposed methodology to a real-world A/B test of a new recommendation model architecture with one control arm and three treatment arms (``cells'') and approximately 24 million users. After selecting \(K=5\), we fit the HMM using 180 pre-treatment days on 10\% of the units in the experiment. The fitted HMM is then frozen, and state-specific treatment effects are estimated using the proposed estimator on the remaining 90\% of users. By leveraging the inferred latent states at the time of treatment assignment as effect modifiers, we apply our proposed framework to investigate whether the new recommendation architecture exerts uniform benefits across the user population or if its efficacy is statistically isolated to specific, transient behavioral profiles. Let \(\widehat Q=n^{-1}\sum_{i=1}^n\widehat{\qb}_i\widehat{\qb}_i^\top\) be the posterior Gram matrix on the effect-estimation users. Posterior variation is adequate for inversion, with the minimum eigenvalue $\lambda_{\min}(\widehat Q)=0.041$ and condition number $\operatorname{cond}(\widehat Q)=6.5$.

\begin{figure*}[t]
\centering
\includegraphics[width=0.85\textwidth]{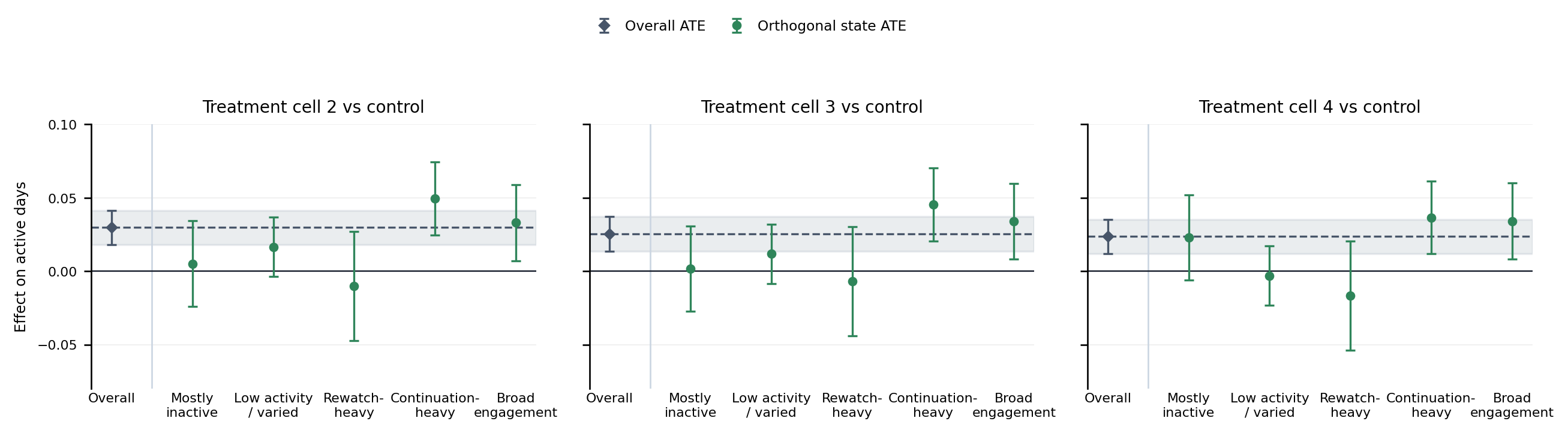}
\caption{\small  Orthogonal latent-state ATEs for a metric that captures active days. Error bars are pointwise 95\% confidence intervals; gray diamonds and bands show the overall ATE. State labels match Figure~\ref{fig:state_action}.}
\label{fig:hte}
\end{figure*}
The learned states represent distinct, interpretable user behaviors as seen in Figure~\ref{fig:state_action}. As Figure~\ref{fig:hte} illustrates, treatment effects vary with the level of engagement (e.g., high vs. low). Estimated latent-state ATEs are not statistically significant in all treatment arms for mostly inactive, low-activity, and rewatch-heavy users; these moods also tend to have the lowest marginal engagement. By contrast, continuation-heavy and broad-engagement users, who are highly engaged, show significantly positive effects across all three treatments. Taken together, the treatments appear most effective among users exhibiting ongoing or broad engagement, with limited evidence of benefit among inactive or rewatch-oriented users.

\section{Discussion and Limitations}
\label{sec:discussion}

In this work, we demonstrated that an HMM trained on pre-treatment behavioral histories can be reused to recover heterogeneous treatment effects across transient latent states. Our unmixing estimator successfully isolates true state-specific effects even when individual state assignments remain uncertain. Furthermore, our orthogonal correction mitigates sensitivity to HMM estimation error, yielding robust state-level inference at scale.

\paragraph{Limitations and Assumptions.}
Interpreting the estimated coefficients as state ATEs requires treatment-effect sufficiency and the number of states ($K$) to be correctly specified. In general, posterior unmixing estimates the best linear approximation of the CATE conditioning on the history (i.e. $\delta(H)$) using the fitted posteriors. With a misspecified state count, the estimated effects will be a weighted sum of the true state ATEs (Appendix~\ref{app:wrong-K}). Under correct HMM specification, the expected score Jacobian equals the negative expected score outer product (OPG approximation), but this equality can fail under misspecification \citep{white1982maximum}. In our application, we performed a hyperparameter tuning to select the number of latent states.

\paragraph{Computational Considerations.}
The orthogonal correction requires estimating and inverting the HMM score Jacobian. We use the OPG approximation of the Jacobian to avoid the prohibitive cost of repeated score evaluations required by finite differences. Because a nearly singular Jacobian can amplify error, in cases of extreme ill-conditioning, aggressive ridge regularization or plain unmixing may be preferable.

\paragraph{Future Work.} A natural extension of this framework is estimating latent-state effects in observational settings. This could involve unmixing a cross-fitted doubly robust pseudo-outcome \citep{kennedy2023towards} for the CATE conditional on the behavioral history. The challenge arises when the latent state $Z_i$ affects both treatment and potential outcomes beyond $H_i$. The observed propensity is a mixture of state-specific propensities, so treated and untreated accounts with the same history can have different state mixtures. A pseudo-outcome based only on $H_i$ need not represent a causal effect, and unmixing it would require additional assumptions about treatment assignment and outcomes within states.

\subsection*{AI use statement}

In this work, we used generative AI tools for: generate synthetic data sets (as part of the simulation experiments), formulate mathematical claims, assist in the writing of proofs, design or provide feedback on experiments, implement methods, interpret results (simulation results only).

We have not used generative AI tools for: help develop theoretical models or conceptual frameworks, provide critical ingredients for proving mathematical claims (proof sketch were provided to the AI models to help write the proof), and propose or refine hypotheses, assist with translation, clean and reformat datasets, support qualitative and thematic data analysis are not applicable to this work.

Additionally, we used generative AI tools for tasks: create or modify scientific figures or images, suggest experimental parameters, create or edit software code, draft parts of a research paper, edit a research paper to improve readability, identify relevant literature, format references.

We have reviewed all AI-assisted work. We went through all additional related literature/citations that was identified by AI models as well as the code that was produced by LLM to implement the method. All LLM-assisted proofs were checked line by line. 

We take responsibility for the final content of this work, including text, claims, or artifacts produced with the aid of generative AI.

\bibliographystyle{plainnat}
\bibliography{ref}

\clearpage
\appendix

\section{Bias of hard and soft state estimators}
\label{app:naive-bias}

\subsection{Posterior weighting and hard assignment}
\label{app:naive-bias-proof}

Hard assignment treats the most likely state as observed, while soft
weighting treats posterior probability as fractional membership. The
following proposition shows that both contrasts generally mix state
effects, even with the true posterior, and gives the conditions under
which they are biased for a state's effect.

\begin{proposition}[Mixture limits of hard and soft contrasts]
\label{prop:naive-estimator-bias}
Fix a state \(k\) and \(K\geq2\). Suppose units are i.i.d.,
\(A_i\sim\operatorname{Bernoulli}(e)\) independently of
\((H_i,Y_i(0),Y_i(1))\) for \(e\in(0,1)\), observed outcomes are
consistent, and \(\E[|Y_i(a)|]<\infty\) for \(a\in\{0,1\}\).
Write \(q_{is}=\Pr(Z_i=s\mid H_i)\) and suppose Assumption~\ref{assum:effect_sufficiency} holds.
Let \(I_{ik}=\ind\{k=\argmax_{s\leq K}q_{is}\}\), using a fixed tie
rule, and suppose \(\Pr(I_{ik}=1)>0\).
For nonnegative weights \(w=(w_1,\ldots,w_n)\), define, whenever both
denominators are positive,
\begin{equation*}
  \begin{aligned}
  \widehat\Delta_n(w)
  &=\frac{\sum_i A_iw_iY_i}{\sum_i A_iw_i}-\frac{\sum_i(1-A_i)w_iY_i}
               {\sum_i(1-A_i)w_i}.
  \end{aligned}
\end{equation*}
Define their population mixing weights by
\begin{align*}
  \omega_{ks}^{\,\mathrm{soft}}
  &=\frac{\E[q_{ik}q_{is}]}{\E[q_{ik}]},\\
  \omega_{ks}^{\,\mathrm{hard}}
  &=\Pr(Z_i=s\mid I_{ik}=1).
\end{align*}
Define the oracle estimators
\(\widehat\tau_{k,\mathrm{or}}^{\,\mathrm{soft}}
  =\widehat\Delta_n((q_{ik})_i)\) and
\(\widehat\tau_{k,\mathrm{or}}^{\,\mathrm{hard}}
  =\widehat\Delta_n((I_{ik})_i)\). Then
\begin{align*}
  \widehat\tau_{k,\mathrm{or}}^{\,\mathrm{soft}}
  &\xrightarrow{p}T_k^{\,\mathrm{soft}}
   :=\sum_{s=1}^K\omega_{ks}^{\,\mathrm{soft}}\tau_{0s},\\
  \widehat\tau_{k,\mathrm{or}}^{\,\mathrm{hard}}
  &\xrightarrow{p}T_k^{\,\mathrm{hard}}
   :=\sum_{s=1}^K\omega_{ks}^{\,\mathrm{hard}}\tau_{0s}.
\end{align*}
Both sets of weights are nonnegative and sum to one. Consequently, the
probability-limit biases relative to the state-\(k\) effect are
\begin{align}
  T_k^{\,\mathrm{soft}}-\tau_{0k}
  &=\sum_{s\ne k}\omega_{ks}^{\,\mathrm{soft}}
       (\tau_{0s}-\tau_{0k}),\nonumber\\
  T_k^{\,\mathrm{hard}}-\tau_{0k}
  &=\sum_{s\ne k}\omega_{ks}^{\,\mathrm{hard}}
       (\tau_{0s}-\tau_{0k}).
  \label{eq:app-naive-bias}
\end{align}
For either oracle estimator, the bias is nonzero exactly when its
corresponding sum in \eqref{eq:app-naive-bias} is nonzero. In
particular, if \(\Pr(I_{ik}=1,\ q_{ik}<1)>0\), both estimators are
asymptotically biased whenever \(\tau_{0k}\) is a strict maximum or
minimum among states of positive probability. Both limits lie below
\(\tau_{0k}\) at a strict maximum and above it at a strict minimum.
\end{proposition}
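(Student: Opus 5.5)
The plan is to treat both oracle contrasts as a single case of a weighted difference in means with a generic bounded weight $w_i=w(H_i)\in[0,1]$ that depends only on the history. For the soft estimator $w=q_{\cdot k}$; for the hard estimator $w=I_{\cdot k}$. Both are measurable functions of $H_i$ taking values in $[0,1]$.

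The first step is the law of large numbers. Since $\E|Y_i(a)|<\infty$ and $w$ is bounded, the four sample averages $n^{-1}\sum_i A_iw_iY_i$, $n^{-1}\sum_i A_iw_i$, $n^{-1}\sum_i (1-A_i)w_iY_i$ and $n^{-1}\sum_i (1-A_i)w_i$ converge almost surely to their means. Using $A_i\perp (H_i,Y_i(0),Y_i(1))$ and consistency, these means are $e\,\E[wY(1)]$, $e\,\E[w]$, $(1-e)\E[wY(0)]$ and $(1-e)\E[w]$. The denominators are positive with probability tending to one because $\E[w]>0$; this holds for hard weights by the assumption $\Pr(I_{ik}=1)>0$, and for soft weights because $\E[q_{ik}]=\pi_k>0$. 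The continuous mapping theorem then gives
\[
\widehat\Delta_n(w)\xrightarrow{p}\frac{\E[w(H)\{Y(1)-Y(0)\}]}{\E[w(H)]}.
\]

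The second step evaluates the numerator by conditioning on $H$ and $Z$, then invoking Assumption~\ref{assum:effect_sufficiency}:
\[
\E[w(H)\{Y(1)-Y(0)\}]=\E[w(H)\tau_{0Z}]=\sum_s \tau_{0s}\E[w(H)\ind(Z=s)]=\sum_s\tau_{0s}\E[w(H)q_s(H)].
\]
For the soft weight this produces $\omega^{\mathrm{soft}}_{ks}=\E[q_kq_s]/\E[q_k]$. For the hard weight it produces $\E[I_kq_s]/\E[I_k]=\Pr(Z=s,I_k=1)/\Pr(I_k=1)=\Pr(Z=s\mid I_k=1)$. Nonnegativity of the weights is immediate, and they sum to one because $\sum_s q_s=1$. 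Subtracting $\tau_{0k}=\sum_s\omega_{ks}\tau_{0k}$ then yields \eqref{eq:app-naive-bias}, and the statement that the bias is nonzero exactly when the sum is nonzero is a tautology.

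The last claim is the only part that needs care. Suppose $\tau_{0k}$ is a strict maximum among the states with $\pi_s>0$. Then each term $\omega_{ks}(\tau_{0s}-\tau_{0k})$ with $s\ne k$ is $\le0$, and any state with $\pi_s=0$ has $\omega_{ks}=0$ because $\E[q_s]=0$ forces $q_s=0$ almost surely. So it suffices to show $\sum_{s\ne k}\omega_{ks}>0$, that is, $\omega_{kk}<1$, for both weight schemes.

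For the hard scheme, $1-\omega^{\mathrm{hard}}_{kk}=\E[I_k(1-q_k)]/\E[I_k]$. This is positive because $I_k(1-q_k)>0$ on the event $\{I_{ik}=1,q_{ik}<1\}$, which has positive probability.

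For the soft scheme, $1-\omega^{\mathrm{soft}}_{kk}=\E[q_k(1-q_k)]/\E[q_k]$. On the same event we have $q_k\ge 1/K>0$, since $q_k$ is the argmax of a probability vector, and $1-q_k>0$. Hence the numerator is positive.

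Finally, positivity of $\sum_{s\ne k}\omega_{ks}$ means some $\omega_{ks}>0$ with $\pi_s>0$ and $\tau_{0s}<\tau_{0k}$. This gives a strictly negative bias at a strict maximum, and the strict minimum case follows by symmetry.
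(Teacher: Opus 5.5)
Your proposal is correct and follows essentially the same route as the paper's proof: a generic history-measurable weight in $[0,1]$, the law of large numbers plus continuous mapping to reach $\E[w(H_i)\delta(H_i)]/\E[w(H_i)]$, posterior reweighting to obtain the soft and hard mixing weights, and the bound $q_{ik}\ge 1/K$ on $\{I_{ik}=1,\ q_{ik}<1\}$ to force positive off-$k$ weight. The only small difference is that you justify $\E[q_{ik}]>0$ through $\pi_k>0$, which the statement leaves implicit, whereas the paper derives it from the hypothesis $\Pr(I_{ik}=1)>0$ via $\E[q_{ik}]\ge \Pr(I_{ik}=1)/K$.
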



\begin{proof}[Proof of Proposition~\ref{prop:naive-estimator-bias}]
First use an oracle weight \(w_i=w(H_i)\in[0,1]\) with
\(\E[w(H_i)]>0\). Put \(p_1=e\) and \(p_0=1-e\).
Randomization, consistency, and the law of large
numbers give, for \(a\in\{0,1\}\),
\begin{align*}
  \frac1n\sum_i\ind\{A_i=a\}w(H_i)Y_i
  &\xrightarrow{p}p_a\E[w(H_i)Y_i(a)],\\
  \frac1n\sum_i\ind\{A_i=a\}w(H_i)
  &\xrightarrow{p}p_a\E[w(H_i)],
\end{align*}
The first-moment assumption makes
these limits finite, and both denominator limits are positive. The
continuous mapping theorem and conditioning on \(H_i\) yield the
effect averaged over the histories selected by the weight:
\begin{equation}
  \widehat\Delta_n((w(H_i))_i)
  \xrightarrow{p}
  \frac{\E[w(H_i)\delta(H_i)]}{\E[w(H_i)]}.
  \label{eq:app-naive-weighted-target}
\end{equation}

On \(\{I_{ik}=1\}\), \(q_{ik}\geq1/K\); thus
\(\E[q_{ik}]\geq\Pr(I_{ik}=1)/K>0\).
For the soft estimator, substitute \(w(H_i)=q_{ik}\) into
\eqref{eq:app-naive-weighted-target} and then use
\eqref{eqn:mixture}:
\begin{equation*}
  \E[q_{ik}\delta(H_i)]
  =\sum_{s=1}^K\E[q_{ik}q_{is}]\tau_{0s}.
\end{equation*}
This establishes the claimed soft-weighting limit. The weights sum to one
because the posterior coordinates do:
\(\sum_s\E[q_{ik}q_{is}]=\E[q_{ik}]\).

For hard assignment, \(I_{ik}\) is a function of \(H_i\), so
\begin{align*}
  \E[I_{ik}\delta(H_i)]
  &=\sum_{s=1}^K\tau_{0s}\E[I_{ik}q_{is}],\\
  \E[I_{ik}q_{is}]
  &=\E\!\left[
      I_{ik}\E[\ind(Z_i=s)\mid H_i]\right]\\
  &=\Pr(I_{ik}=1,Z_i=s).
\end{align*}
Dividing by \(\E[I_{ik}]=\Pr(I_{ik}=1)\) gives the claimed
hard-assignment limit. The hard weights
sum to one because they are conditional state probabilities. Subtracting
\(\tau_{0k}\) from each convex combination proves
\eqref{eq:app-naive-bias}.

Finally, the total weight assigned to states other than \(k\) is
\begin{equation*}
  \begin{aligned}
  \sum_{s\ne k}\omega_{ks}^{\,\mathrm{soft}}
  &=\frac{\E[q_{ik}(1-q_{ik})]}{\E[q_{ik}]},\\
  \sum_{s\ne k}\omega_{ks}^{\,\mathrm{hard}}
  &=\Pr(Z_i\ne k\mid I_{ik}=1).
  \end{aligned}
\end{equation*}
On \(\{I_{ik}=1,\ q_{ik}<1\}\), the maximum-posterior rule implies
\(q_{ik}\geq 1/K>0\), so both quantities are strictly positive when
this event has positive probability. A state of zero probability has
zero mixture weight. Thus, if \(\tau_{0k}\) is a strict maximum, every
nonzero-weight term in \eqref{eq:app-naive-bias} is negative; at a
strict minimum, every nonzero-weight term is positive. This proves
the claimed directions of both biases.
\end{proof}

\section{Orthogonal inference for the state ATEs}
\label{app:orthogonal-unmixing-proof}

Throughout this section, \(\E[f(W)]\) averages over an
independent target unit. If \(f\) depends on the externally fitted HMM,
this expectation and its \(L_2\) norm condition on the HMM training
sample. For a scalar- or vector-valued function \(f\) of an analysis unit,
write \(\norm{f}_{L_2}^2=\E[\norm{f(W_i)}^2]\).

This section separates the correction-rate calculation from the
inference argument. Lemma~\ref{lem:unmix-sensitivity-estimation}
verifies the theorem's assumed rate for the correction computed on
the target units. The subsequent theorem proof establishes
orthogonality, the asymptotic expansion, and consistency of the
sandwich covariance under its stated rate conditions.

\begin{assumption}[Local regularity for orthogonal inference]
\label{ass:unmix-local-regularity}
As \(\eta\to\eta_0\),
\(\qb_\eta(H)\), \(\bm{s}^H(H;\eta)\), and
\(\bm{\psi}(\taub_0,\eta)\) are continuous in \(L_2\).
For \(u=\eta-\eta_0\), their relevant population means have the
second-order expansions
\begin{equation}
  \begin{aligned}
  \E[\bm{\psi}_i(\taub_0,\eta)]
    &=G_0u+O(\norm{u}^2),\\
  \E[\bm{s}^H(H_i;\eta)]
    &=J_0u+O(\norm{u}^2).
  \end{aligned}
  \label{eq:unmix-local-smoothness}
\end{equation}
\end{assumption}

\subsection{Rate of the target-sample correction}
\label{app:unmix-sensitivity-estimation-proof}

The following lemma verifies the rate for
\eqref{eq:unmix-sensitivity-estimator}. The result holds when the
preliminary effect estimate and derivative averages use the same
target units.

\begin{lemma}[Rate of the target-sample correction]
\label{lem:unmix-sensitivity-estimation}
Assume the sampling, randomization, correctly specified HMM, mixture,
rank, history-score, and outcome moment conditions of
Theorem~\ref{thm:orthogonal-unmixing}. Let the HMM fit on its
independent history sample satisfy
\(\norm{\widehat\eta-\eta_0}=O_p(r_\eta)\) for
\(r_\eta=o(1)\). In neighborhoods of
\((\taub_0,\eta_0)\), suppose nonnegative envelopes
\(L_q(H)\), \(L_G(W)\), and \(L_J(H)\) satisfy
\(\E[L_q^2]+\E[L_G]+\E[L_J]<\infty\) and
\begin{align*}
  \norm{\qb_\eta(H)-\qb_0(H)}
  &\leq L_q(H)\norm{\eta-\eta_0},\\
  \opnorm{D_\eta\bm{\psi}(\taub,\eta)
           -D_\eta\bm{\psi}(\taub_0,\eta_0)}
  &\leq L_G(W)
      \{\norm{\taub-\taub_0}
        +\norm{\eta-\eta_0}\},\\
  \opnorm{D_\eta\bm{s}^H(H;\eta)
           -D_\eta\bm{s}^H(H;\eta_0)}
  &\leq L_J(H)\norm{\eta-\eta_0}.
\end{align*}
Assume also
\(\E[\fronorm{D_\eta\bm{\psi}_i
      (\taub_0,\eta_0)}^2]<\infty\) and
\(\E[\fronorm{D_\eta\bm{s}^H
      (H_i;\eta_0)}^2]<\infty\).
Then the construction in
\eqref{eq:unmix-sensitivity-estimator}, computed on all target
units, satisfies
\begin{equation}
  \norm{\widehat{\taub}^{\,\mathrm{pl}}-\taub_0}
    =O_p(n^{-1/2}+r_\eta),
  \qquad
  \opnorm{\widehat B-B_0}
    =O_p(n^{-1/2}+r_\eta).
  \label{eq:app-sensitivity-rate}
\end{equation}
\end{lemma}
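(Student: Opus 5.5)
We prove the two rates in sequence, conditioning throughout on the HMM training sample so that \(\widehat\eta\) is fixed and the target units remain i.i.d. Each rate comes from an ``empirical-process at a fixed nuisance'' term of order \(n^{-1/2}\) plus a ``nuisance-shift'' term of order \(r_\eta\). We then pass from conditional to unconditional \(O_p\) statements by the standard argument: a conditional \(O_p\) bound whose constant does not depend on the training sample, on an event of probability tending to one, implies the unconditional bound.

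\emph{Step 1: plug-in rate.} First we show \(\widehat M\to M_0\) in probability and \(\opnorm{\widehat M^{-1}}=O_p(1)\). Since \(\widetilde A_i^2\le1\) and posteriors are bounded by one, we decompose
\[
\widehat M-M_0=(\E_n-\E)[\widetilde A^2\widehat{\qb}\widehat{\qb}^\top]+\E[\widetilde A^2(\widehat{\qb}\widehat{\qb}^\top-\qb_0\qb_0^\top)].
\]
Conditionally, the first term is \(O_p(n^{-1/2})\) by Chebyshev with bounded summands. By the envelope \(L_q\), the second is at most \(2v_A\E[L_q]\norm{\widehat\eta-\eta_0}=O_p(r_\eta)\). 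Combined with \(\lambda_{\min}(M_0)\ge v_A\kappa_0\), this gives invertibility with probability tending to one.

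Next we write
\[
\widehat{\taub}^{\,\mathrm{pl}}-\taub_0=\widehat M^{-1}\E_n[\bm\psi_i(\taub_0,\widehat\eta)],
\]
which follows because \(\bm\psi\) is linear in \(\taub\) with Jacobian \(-\widetilde A^2\qb\qb^\top\). We split
\[
\E_n\bm\psi(\taub_0,\widehat\eta)=(\E_n-\E)\bm\psi(\taub_0,\widehat\eta)+\E\bm\psi(\taub_0,\widehat\eta).
\]
The centered term has conditional variance bounded by \(\E\norm{\bm\psi}^2\le 2\E[Y^2]+2\norm{\taub_0}^2<\infty\), uniformly in \(\widehat\eta\) since \(\norm{\qb}\le1\); hence it is \(O_p(n^{-1/2})\). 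For the mean term, Assumption~\ref{ass:unmix-local-regularity} gives \(G_0u+O(\norm u^2)=O_p(r_\eta)\). Alternatively, \eqref{eq:unmix-population-moment} and the \(L_q\) Lipschitz bound give this directly, using \(|\delta(H)|\le\E[|Y(1)-Y(0)|\mid H]\). This yields the first rate in \eqref{eq:app-sensitivity-rate}.

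\emph{Step 2: the numerator of \(\widehat B\).} We decompose
\[
\E_n D_\eta\bm\psi(\widehat{\taub}^{\,\mathrm{pl}},\widehat\eta)-G_0=\E_n[D_\eta\bm\psi(\widehat{\taub}^{\,\mathrm{pl}},\widehat\eta)-D_\eta\bm\psi(\taub_0,\eta_0)]+(\E_n-\E)D_\eta\bm\psi(\taub_0,\eta_0).
\]
The second piece is \(O_p(n^{-1/2})\) by the finite second moment of \(\fronorm{D_\eta\bm\psi_i(\taub_0,\eta_0)}\). We bound the first by
\[
\E_n[L_G]\,\{\norm{\widehat{\taub}^{\,\mathrm{pl}}-\taub_0}+\norm{\widehat\eta-\eta_0}\},
\]
where \(\E_n[L_G]=O_p(1)\) by Markov and Step 1 controls the bracket. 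This piece is the main obstacle. The estimate \(\widehat{\taub}^{\,\mathrm{pl}}\) is computed from the same target units, so we cannot condition on it. The deterministic envelope inequality sidesteps this dependence, since it holds pointwise for every \(W_i\) once \((\widehat{\taub}^{\,\mathrm{pl}},\widehat\eta)\) lies in the neighborhood, and that event has probability tending to one by Step 1. No sample splitting or empirical-process uniformity is therefore needed.

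\emph{Step 3: the denominator and the ratio.} In the same way,
\[
\E_n D_\eta\bm s^H(H;\widehat\eta)-J_0=O_p\bigl(\E_n[L_J]\,r_\eta\bigr)+O_p(n^{-1/2})=O_p(n^{-1/2}+r_\eta).
\]
Because \(J_0\) is nonsingular, the perturbation identity
\[
\widehat J^{-1}-J_0^{-1}=-\widehat J^{-1}(\widehat J-J_0)J_0^{-1}
\]
gives \(\opnorm{\widehat J^{-1}}=O_p(1)\) and the same rate for the inverse. Finally,
\[
\widehat B-B_0=(\widehat G-G_0)\widehat J^{-1}+G_0(\widehat J^{-1}-J_0^{-1}),
\]
and combining Steps 2 and 3 yields \(\opnorm{\widehat B-B_0}=O_p(n^{-1/2}+r_\eta)\), the second rate in \eqref{eq:app-sensitivity-rate}.
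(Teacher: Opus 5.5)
Your proposal is correct and follows essentially the same route as the paper's proof: rates for $\widehat M$ and the plug-in moment, then pointwise envelope bounds for $\widehat G$ and $\widehat J$ (which, as you note, handle the reuse of target units for $\widehat{\taub}^{\,\mathrm{pl}}$), then an inverse perturbation and product expansion for $\widehat B$. The only variation is in Step~1, where you center the empirical averages at $\widehat\eta$ and control the population mean shift, whereas the paper centers at $\eta_0$ and bounds $\E_n[(\qb_{\widehat\eta}-\qb_0)\widetilde A Y]$ by $\E_n[L_q|Y|]\,\norm{\widehat\eta-\eta_0}$; of your two justifications for that shift, use the $L_q$-Lipschitz one, since Assumption~\ref{ass:unmix-local-regularity} is not among the lemma's hypotheses.
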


\begin{proof}
Use the externally trained HMM fit after its training-sample label
rule; its labels agree with the true labeling with probability tending
to one. Write \(\widehat G=\E_n[D_\eta\bm{\psi}
  (\widehat{\taub}^{\,\mathrm{pl}},\widehat\eta)]\) and
\(\widehat J=\E_n[D_\eta\bm{s}^H(H;\widehat\eta)]\), so that
\(\widehat B=\widehat G\widehat J^{-1}\).
The envelope conditions justify
\(G_0=\E[D_\eta\bm{\psi}_i(\taub_0,\eta_0)]\).
The mixture identity and
\eqref{eq:unmix-randomization} give
\begin{equation*}
  \E[\qb_0(H_i)\widetilde A_iY_i]
  =v_A\E[\qb_0(H_i)\delta(H_i)]
  =M_0\taub_0.
\end{equation*}
Write \(g_\eta(W)=\widetilde A^2\qb_\eta(H)
\qb_\eta(H)^\top\). The posterior-simplex bound gives
\(\opnorm{g_\eta-g_{\eta_0}}
 \leq2L_q(H)\norm{\eta-\eta_0}\).
Decompose
\begin{equation*}
  \widehat M-M_0
  =(\E_n-\E)[g_{\eta_0}]
    +\E_n[g_{\widehat\eta}-g_{\eta_0}].
\end{equation*}
Because \(g_{\eta_0}\) is bounded and \(\E_n[L_q]=O_p(1)\), the
two terms are respectively \(O_p(n^{-1/2})\) and \(O_p(r_\eta)\).
Therefore
\begin{equation}
  \opnorm{\widehat M-M_0}
   =O_p(n^{-1/2}+r_\eta).
  \label{eq:app-sensitivity-preliminary-gram}
\end{equation}
Similarly, \(\qb_0\widetilde A Y\) has finite second moment,
while Cauchy--Schwarz gives
\(\E_n[L_q(H)|Y|]=O_p(1)\). The decomposition
\begin{align*}
  \E_n[\qb_{\widehat\eta}(H)\widetilde A Y]
       -M_0\taub_0
  &=(\E_n-\E)[\qb_0(H)\widetilde A Y]\\
  &\quad+\E_n[(\qb_{\widehat\eta}(H)-\qb_0(H))
                            \widetilde A Y]
\end{align*}
then gives
\begin{equation}
  \norm{\E_n[\qb_{\widehat\eta}(H)
                      \widetilde A Y]
        -M_0\taub_0}
   =O_p(n^{-1/2}+r_\eta).
  \label{eq:app-sensitivity-preliminary-rhs}
\end{equation}
Positive definiteness of \(M_0\) and
\eqref{eq:app-sensitivity-preliminary-gram}--
\eqref{eq:app-sensitivity-preliminary-rhs} imply the preliminary
effect rate in \eqref{eq:app-sensitivity-rate}.

For the derivative mean, add and subtract its value at the fixed
population arguments:
\begin{align*}
  \widehat G-G_0
  &=(\E_n-\E)[D_\eta\bm{\psi}
       (\taub_0,\eta_0)]+\E_n[D_\eta\bm{\psi}
       (\widehat{\taub}^{\,\mathrm{pl}},\widehat\eta)
       -D_\eta\bm{\psi}
       (\taub_0,\eta_0)].
\end{align*}
The first term is \(O_p(n^{-1/2})\) by its finite second moment.
The second is bounded in operator norm by
\(\E_n[L_G\{\norm{\widehat{\taub}^{\,\mathrm{pl}}-\taub_0}
 +\norm{\widehat\eta-\eta_0}\}]\), which is
\(O_p(n^{-1/2}+r_\eta)\) by the envelope law of large numbers.
The same decomposition with \(D_\eta\bm{s}^H\) and \(L_J\) gives
\(\opnorm{\widehat J-J_0}
 =O_p(n^{-1/2}+r_\eta)\).
These pointwise envelope bounds hold even though
\(\widehat{\taub}^{\,\mathrm{pl}}\) and the derivative averages use the same target
units. Nonsingularity of \(J_0\)
implies
\(\widehat J^{-1}=J_0^{-1}
  +O_p(n^{-1/2}+r_\eta)\).
Expanding \(\widehat G\widehat J^{-1}
 -G_0J_0^{-1}\) proves the asserted rate for
\(\widehat B\).
\end{proof}

\subsection{Orthogonal Correction and Inference}

\begin{proof}[Proof of Theorem~\ref{thm:orthogonal-unmixing}]
Under the additional conditions of
Lemma~\ref{lem:unmix-sensitivity-estimation}, the target-sample
construction satisfies the theorem's correction-rate assumption with
\(r_B=n^{-1/2}+r_\eta\). The proof below uses the theorem's stated
rate \(r_B\), which also allows other constructions that attain it.

In expectations, abbreviate
\(\qb_\eta(H_i)\) and \(\qb_0(H_i)\) by \(\qb_\eta\) and \(\qb_0\).
Correct HMM specification gives
\(\qb_0(H)=(\Pr(Z_i=s\mid H_i=H))_{s=1}^K\).
Assumption~\ref{assum:effect_sufficiency} implies
\(\delta(H)=\qb_0(H)^\top\taub_0\). Hence
\(\E[\qb_0(H_i)\delta(H_i)]=Q_0\taub_0\).
Positive definiteness of \(Q_0\) makes \(\taub_0\) the unique
coefficient satisfying this equation.

\paragraph{Population orthogonality.}
By \eqref{eq:unmix-population-moment},
\(\E[\bm{\psi}_i(\taub_0,\eta_0)]=\mathbf 0\).
The HMM score has mean zero at \(\eta_0\) by assumption. The negative
target Jacobian of the corrected score is
\begin{equation*}
  -\E[D_{\taub}\bm{\phi}_i(\taub_0,\eta_0,B_0)]
  =\E[\widetilde A_i^2\qb_0\qb_0^\top]
  =v_AQ_0=M_0.
\end{equation*}

Put \(u=\eta-\eta_0\). The local mean expansions
\eqref{eq:unmix-local-smoothness} give
\(\E[\bm{\psi}_i(\taub_0,\eta)]=G_0u+R_\psi(u)\) and
\(\E[\bm{s}^H(H_i;\eta)]=J_0u+R_s(u)\), where
\(\norm{R_\psi(u)}+\norm{R_s(u)}\leq C\norm{u}^2\).
For \(D=B-B_0\), the mean corrected score at \(\taub_0\) is
\begin{align*}
  \Phi(\eta,B)
  &:=\E[\bm{\phi}_i(\taub_0,\eta,B)]\\
  &=(G_0-B_0J_0)u
      +R_\psi(u)-B_0R_s(u)-D\{J_0u+R_s(u)\}.
\end{align*}
The first term vanishes by \(B_0=G_0J_0^{-1}\).
On a sufficiently small neighborhood,
\begin{equation*}
  \norm{\Phi(\eta,B)}
  \leq C\{\norm{\eta-\eta_0}^2
       +\opnorm{B-B_0}\norm{\eta-\eta_0}\}.
\end{equation*}

\paragraph{Target-sample fluctuation.}
The external HMM histories determine the labeling of
\(\widehat\eta\) independently of the target units. The assumed
rate for \(\widehat\eta-\eta_0\) is in this common labeling.
Put
\begin{align*}
  f_0(W)
    &=\bm{\phi}(W;\taub_0,\eta_0,B_0),\\
  d(W)
    &=\bm{\psi}(W;\taub_0,\widehat\eta)
      -B_0\bm{s}^H(H;\widehat\eta)-f_0(W).
\end{align*}
Since \(f_0(W)=\bm{\psi}(W;\taub_0,\eta_0)
-B_0\bm{s}^H(H;\eta_0)\), the triangle inequality gives
\begin{align*}
  \norm{d}_{L_2}
  &\leq\norm{\bm{\psi}(\taub_0,\widehat\eta)
              -\bm{\psi}(\taub_0,\eta_0)}_{L_2}\\
  &\quad+\opnorm{B_0}
       \norm{\bm{s}^H(H;\widehat\eta)
              -\bm{s}^H(H;\eta_0)}_{L_2}
   =o_p(1),
\end{align*}
by the assumed \(L_2\) continuity and consistency of
\(\widehat\eta\). Both terms in \(f_0\) have zero population mean, so
\(\E[f_0]=\mathbf 0\). Taking expectations in the definition of \(d\)
and setting \(u=\widehat\eta-\eta_0\), the local mean expansions give
\begin{align*}
  \E[d]
  &=\E[\bm{\psi}(\taub_0,\widehat\eta)]
      -B_0\E[\bm{s}^H(H;\widehat\eta)]\\
  &=\{G_0u+R_\psi(u)\}-B_0\{J_0u+R_s(u)\}\\
  &=R_\psi(u)-B_0R_s(u),
\end{align*}
where the last equality uses \(B_0J_0=G_0\). The remainder bound and
\(\norm{u}=O_p(r_\eta)\) therefore give
\(\norm{\E[d]}=O_p(r_\eta^2)\).
Let \(\mathcal F_{\mathrm{tr}}\) contain the external HMM training
data and fitting randomness. Conditional on \(\mathcal F_{\mathrm{tr}}\),
\(d\) is fixed and the target units are independent. Put
\(X_i=d(W_i)-\E[d]\), so
\(\E[X_i\mid\mathcal F_{\mathrm{tr}}]=\mathbf 0\).
Conditional independence makes the cross terms in the squared norm
vanish. Expanding \(\norm{X_i}^2\) then gives
\begin{align*}
  \E\!\left[
    \norm{(\E_n-\E)[d]}^2\mid\mathcal F_{\mathrm{tr}}
    \right]
  &=\frac1{n^2}\sum_{i=1}^n
    \E[\norm{X_i}^2\mid\mathcal F_{\mathrm{tr}}]\\
  &=\frac1n\{\norm{d}_{L_2}^2-\norm{\E[d]}^2\}\\
  &\leq\frac{\norm{d}_{L_2}^2}{n}.
\end{align*}
Conditional Chebyshev's inequality therefore gives
\((\E_n-\E)[d]=o_p(n^{-1/2})\).

The target estimating equation has sample mean
\(S_n(\taub)=\E_n[\bm{\psi}(W;\taub,\widehat\eta)]
  -\widehat B \E_n[\bm{s}^H(H;\widehat\eta)]\).
The estimate \(\widehat B\) uses the target units, so we separate
its error rather than condition on it. Adding and subtracting
\(B_0\E_n[\bm{s}^H(H;\widehat\eta)]\) gives
\begin{align*}
  S_n(\taub_0)
  &=\frac1n\sum_{i=1}^n\bm{\phi}_{0i}
    +\E[d]+(\E_n-\E)[d]\\
  &\quad-(\widehat B-B_0)
              \E_n[\bm{s}^H(H;\widehat\eta)].
\end{align*}
The second expansion in \eqref{eq:unmix-local-smoothness} gives
\(\E[\bm{s}^H(H;\widehat\eta)]=O_p(r_\eta)\).
Conditional on the external HMM fit, the target units are
independent. \(L_2\) continuity of the history score implies
\(\norm{\bm{s}^H(H;\widehat\eta)}_{L_2}=O_p(1)\), so conditional
Chebyshev's inequality gives
\((\E_n-\E)[\bm{s}^H(H;\widehat\eta)]=O_p(n^{-1/2})\).
Therefore
\((\widehat B-B_0)\E_n[\bm{s}^H(H;\widehat\eta)]
 =O_p\{r_B(n^{-1/2}+r_\eta)\}\).
Since \(r_B=o(1)\),
\begin{equation}
  S_n(\taub_0)
  =\frac1n\sum_{i=1}^n\bm{\phi}_{0i}
    +O_p(r_\eta^2+r_\eta r_B)+o_p(n^{-1/2}).
  \label{eq:app-unmix-target-score-expansion}
\end{equation}

\paragraph{Gram matrix, rate, and limit.}
We establish Gram consistency under the theorem's \(L_2\)
continuity condition. The preceding lemma gives a stronger
quantitative bound under its additional envelope condition.
Posterior probabilities are bounded, which keeps the sample Gram
matrix stable as \(\widehat\eta\) approaches \(\eta_0\).
Put \(g_\eta(W)=\widetilde A^2\qb_\eta(H)\qb_\eta(H)^\top\).
Posterior vectors lie in the probability simplex, so
\(\norm{\qb_\eta(H)}\leq1\) and
\begin{equation}
  \opnorm{g_\eta(W)-g_{\eta_0}(W)}
  \leq2\norm{\qb_\eta(H)-\qb_0(H)}.
  \label{eq:app-unmix-gram-bound}
\end{equation}
Decompose
\begin{align*}
  \widehat M-M_0
  &=(\E_n-\E)[g_{\eta_0}]
    +\E_n[g_{\widehat\eta}-g_{\eta_0}].
\end{align*}
The first term is \(O_p(n^{-1/2})\), since
\(g_{\eta_0}\) is bounded. Correct specification and the common state
labeling give \(\qb_{\eta_0}=\qb_0\). The assumed HMM rate has
\(r_\eta\to0\), so \(\widehat\eta\xrightarrow{p}\eta_0\). For any
\(\veps>0\), the \(L_2\) continuity of \(\qb_\eta\) supplies
\(\gamma>0\) such that
\(\norm{\eta-\eta_0}<\gamma\) implies
\(\norm{\qb_\eta-\qb_0}_{L_2}<\veps\). Thus
\begin{equation*}
  \Pr\!\left\{\norm{\qb_{\widehat\eta}-\qb_0}_{L_2}
          \geq\veps\right\}
  \leq\Pr\!\left\{\norm{\widehat\eta-\eta_0}\geq\gamma\right\}
  \to0,
\end{equation*}
which proves
\(\norm{\qb_{\widehat\eta}-\qb_0}_{L_2}=o_p(1)\).
By \eqref{eq:app-unmix-gram-bound}, the second term is bounded in operator norm by
\(2\E_n[\norm{\qb_{\widehat\eta}(H)-\qb_0(H)}]\). Conditional on the external fit, its expectation is at most
\(2\norm{\qb_{\widehat\eta}-\qb_0}_{L_2}=o_p(1)\).
Conditional Markov's inequality therefore gives
\(\widehat M=M_0+o_p(1)\). Since
\(\lambda_{\min}(M_0)\geq v_A\kappa_0\), Weyl's inequality implies,
with probability tending to one,
\begin{equation}
  \opnorm{\widehat M^{-1}}
    \leq\frac{2}{v_A\kappa_0},
  \qquad \widehat M^{-1}=M_0^{-1}+o_p(1).
  \label{eq:app-unmix-inverse}
\end{equation}
The matrix \(\widehat B\) is computed using the preliminary
\(\widehat{\taub}^{\,\mathrm{pl}}\) and held fixed when solving for
\(\taub\). By the definitions of \(\bm{\psi}\) and \(\widehat M\),
the sample score is affine in \(\taub\):
\begin{align*}
  S_n(\taub)
  &=\E_n[\qb_{\widehat\eta}(H)\widetilde A Y]
    -\widehat M\taub
    -\widehat B \E_n[\bm{s}^H(H;\widehat\eta)]\\
  &=S_n(\taub_0)-\widehat M(\taub-\taub_0).
\end{align*}
Equation~\eqref{eq:unmix-plugin} makes the first term
\(\widehat M\widehat{\taub}^{\,\mathrm{pl}}\), so
\eqref{eq:unmix-orthogonal-estimator} satisfies
\(S_n(\widehat{\taub}^{\,\perp})=\mathbf 0\). Thus
\(\mathbf 0=S_n(\taub_0)
-\widehat M(\widehat{\taub}^{\,\perp}-\taub_0)\), and inversion gives
the exact identity
\begin{equation}
  \widehat{\taub}^{\,\perp}-\taub_0
  =\widehat M^{-1}S_n(\taub_0).
  \label{eq:app-unmix-linear-equation}
\end{equation}
The mean-zero vectors \(\bm{\phi}_{0i}\) have finite variance because
posterior probabilities and \(\widetilde A_i\) are bounded, while
\(Y_i(a)\) and the true history score have finite
second moments. Their average is therefore \(O_p(n^{-1/2})\). Combining
\eqref{eq:app-unmix-target-score-expansion} and
\eqref{eq:app-unmix-inverse}--\eqref{eq:app-unmix-linear-equation}
gives the expansion \eqref{eq:unmix-orthogonal-expansion}. The inverse
bound in \eqref{eq:app-unmix-inverse} then gives
\eqref{eq:unmix-rate}. If
\(\sqrt n(r_\eta^2+r_\eta r_B)\to0\), the remainder in
\eqref{eq:unmix-orthogonal-expansion} is \(o_p(n^{-1/2})\).
The normal limit in \eqref{eq:unmix-clt} follows from the multivariate
central limit theorem and Slutsky's theorem.

\paragraph{Sandwich covariance.}
Let \(\widehat{\bm{\phi}}_i\) be the fitted score at
\(\widehat{\taub}^{\,\perp}\). Affinity gives
\begin{equation*}
  \widehat{\bm{\phi}}_i-\bm{\phi}_{0i}
  =d(W_i)-(\widehat B-B_0)
       \bm{s}^H(H_i;\widehat\eta)
       -\widetilde A_i^2\widehat{\qb}_i
       \widehat{\qb}_i^\top
       (\widehat{\taub}^{\,\perp}-\taub_0).
\end{equation*}
Conditional Markov's inequality and
\(\norm{d}_{L_2}=o_p(1)\) imply
\(\E_n[\norm{d(W)}^2]=o_p(1)\).
The \(L_2\) continuity of the history score and independence of
the external fit imply
\(\E_n[\norm{\bm{s}^H(H;\widehat\eta)}^2]=O_p(1)\).
This, \(\opnorm{\widehat B-B_0}=o_p(1)\), the posterior-simplex
bound, and the estimation rate imply
\(n^{-1}\sum_i\norm{\widehat{\bm{\phi}}_i
                 -\bm{\phi}_{0i}}^2=o_p(1)\).
Cauchy--Schwarz and the law of large numbers for
\(n^{-1}\sum_i\bm{\phi}_{0i}\bm{\phi}_{0i}^\top\) give
\(\widehat\Omega\xrightarrow{p}\Omega_0\).
Together with \eqref{eq:app-unmix-inverse}, this proves the sandwich
assertion in the theorem.
\end{proof}

\section{When treatment-effect sufficiency is violated}
\label{app:effect-sufficiency-violation}

This section keeps randomized treatment and a correctly specified,
label-aligned HMM with the true number of states \(K\), but allows
treatment effects to vary with history within a state. If the estimator in
Section~\ref{sec:unmixing-estimation} is left unchanged, it estimates a
projection of the history-conditional effect. The gap from the true state
effects depends on how within-state effect variation covaries with the
posterior; the prevalence-weighted average still equals the overall ATE.

\paragraph{Conditional effects and the projection target.}
Let \(\Delta_i=Y_i(1)-Y_i(0)\). For each state with positive
prevalence, write
\[
  \tau_s(h)=\E[\Delta_i\mid Z_i=s,H_i=h],\qquad
  \tau_{0s}=\E[\tau_s(H_i)\mid Z_i=s].
\]
The conditional effect \(\tau_s(h)\) is needed only where
\(q_{0s}(h)>0\).
Define the residual and decompose the conditional effect:
\begin{equation}
\begin{aligned}
  r(H)
  &=\sum_{s=1}^K q_{0s}(H)\{\tau_s(H)-\tau_{0s}\},\\
  \delta(H)
  &=\sum_{s=1}^K q_{0s}(H)\tau_s(H)
    =\qb_0(H)^\top\taub_0+r(H).
\end{aligned}
\label{eq:app-sufficiency-decomposition}
\end{equation}
Assumption~\ref{assum:effect_sufficiency} makes every within-state
deviation zero and therefore implies \(r(H)=0\). The converse need not
hold: deviations in different states can cancel in the observable
mixture.

With finite outcome second moments, suppose
\(Q_0=\E[\qb_0(H_i)\qb_0(H_i)^\top]\) is nonsingular. The
population moment \eqref{eq:unmix-population-moment} then has the unique
root
    \begin{equation}
      \taub^\star
      =Q_0^{-1}\E[\qb_0(H_i)\delta(H_i)]
      =\argmin_{\taub\in\R^K}
        \E\!\left[\{\delta(H_i)-\qb_0(H_i)^\top\taub\}^2\right].
      \label{eq:app-sufficiency-projection}
    \end{equation}
Thus \(\qb_0(H)^\top\taub^\star\) is the best linear approximation to
\(\delta(H)\) using the state posterior coordinates. 

\paragraph{Bias relative to the state effects.}
For any measurable \(g(H_i)\) integrable within state \(s\),
posterior reweighting gives
\begin{equation}
  \E[q_{0s}(H_i)g(H_i)]
  =\pi_s\,\E[g(H_i)\mid Z_i=s].
  \label{eq:app-sufficiency-reweighting}
\end{equation}
This identity follows from
\(q_{0s}(H_i)=\E[\ind\{Z_i=s\}\mid H_i]\).

\begin{proposition}[Bias under within-state effect variation]
\label{prop:unmix-effect-variation}
Suppose \(A_i\sim\operatorname{Bernoulli}(e)\) independently of
\((H_i,Y_i(0),Y_i(1))\), with known \(e\in(0,1)\). Suppose outcomes are
consistent with finite second moments, the HMM is correctly specified
with the true \(K\), and \(\lambda_{\min}(Q_0)>0\). Treatment-effect
sufficiency is not required. Then
\begin{equation}
  \begin{aligned}
    c_t&=\sum_{s=1}^K\pi_s
      \operatorname{Cov}\{q_{0t}(H_i),\tau_s(H_i)\mid Z_i=s\},
      \quad t=1,\ldots,K,\\
    \taub^\star-\taub_0&=Q_0^{-1}\bm{c}, \quad \bm{c}=(c_1,\ldots,c_K)^\top.
  \end{aligned}
  \label{eq:app-sufficiency-bias}
\end{equation}
Moreover,
\[
  \bm{\pi}^\top\taub^\star
  =\bm{\pi}^\top\taub_0
  =\E[\Delta_i].
\]
\end{proposition}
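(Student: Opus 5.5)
The plan is to work directly from the projection formula \eqref{eq:app-sufficiency-projection} together with the decomposition \eqref{eq:app-sufficiency-decomposition}. Substituting \(\delta(H)=\qb_0(H)^\top\taub_0+r(H)\) gives
\begin{equation*}
  \taub^\star=Q_0^{-1}\E[\qb_0(H_i)\qb_0(H_i)^\top]\taub_0+Q_0^{-1}\E[\qb_0(H_i)r(H_i)]
  =\taub_0+Q_0^{-1}\E[\qb_0(H_i)r(H_i)].
\end{equation*}
Before using this, I will check that \(\taub^\star\) is well defined. Finite second moments of outcomes make \(\delta(H_i)\) square integrable, and the posterior vector is bounded, so every expectation above is finite. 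Nonsingularity of \(Q_0\) gives uniqueness of the root of \eqref{eq:unmix-population-moment}, via the randomization identity \eqref{eq:unmix-randomization}. It therefore suffices to show \(\E[q_{0t}(H_i)r(H_i)]=c_t\) for each \(t\).

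For the key step, write
\begin{equation*}
  \E[q_{0t}r]=\sum_s\E[q_{0s}(H_i)\,g_{ts}(H_i)],
  \qquad g_{ts}(H)=q_{0t}(H)\{\tau_s(H)-\tau_{0s}\}.
\end{equation*}
Applying the reweighting identity \eqref{eq:app-sufficiency-reweighting} with this \(g_{ts}\) gives
\begin{equation*}
  \pi_s\,\E\big[q_{0t}(H_i)\{\tau_s(H_i)-\tau_{0s}\}\mid Z_i=s\big].
\end{equation*}
Since \(\tau_{0s}=\E[\tau_s(H_i)\mid Z_i=s]\), this conditional expectation is exactly \(\operatorname{Cov}\{q_{0t}(H_i),\tau_s(H_i)\mid Z_i=s\}\). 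Summing over \(s\) yields \(c_t\), which proves \eqref{eq:app-sufficiency-bias}. For states with \(\pi_s=0\) the term vanishes, so \(\tau_s\) is needed only where \(q_{0s}>0\).

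The main technical point I expect is integrability of \(g_{ts}\) within state \(s\). I will handle it by noting \(|q_{0t}|\le1\) and that \(\E[|\tau_s(H_i)|\mid Z_i=s]\le\E[|\Delta_i|\mid Z_i=s]<\infty\) whenever \(\pi_s>0\), which follows from finite second moments.

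For the prevalence identity, \(\bm{\pi}^\top\taub_0=\sum_s\pi_s\E[\Delta_i\mid Z_i=s]=\E[\Delta_i]\) by iterated expectation. It remains to show \(\bm{\pi}^\top Q_0^{-1}\bm c=0\). Because \(\sum_t q_{0t}=1\), we have \(Q_0\mathbf 1=\E[\qb_0]=\bm{\pi}\), so \(\bm{\pi}^\top Q_0^{-1}=\mathbf 1^\top\) by symmetry of \(Q_0\). Then
\begin{equation*}
  \mathbf 1^\top\bm c=\sum_s\pi_s\operatorname{Cov}\{1,\tau_s(H_i)\mid Z_i=s\}=0,
\end{equation*}
or equivalently \(\mathbf 1^\top\E[\qb_0 r]=\E[r]=\sum_s\pi_s\E[\tau_s-\tau_{0s}\mid Z=s]=0\) by the reweighting identity. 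Hence \(\bm{\pi}^\top\taub^\star=\bm{\pi}^\top\taub_0=\E[\Delta_i]\).
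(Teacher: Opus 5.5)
Your proposal is correct and follows essentially the same route as the paper: substitute the decomposition into the projection formula to get $Q_0(\taub^\star-\taub_0)=\E[\qb_0(H_i)r(H_i)]$, apply the posterior reweighting identity term by term, and recognize each centered conditional expectation as a within-state covariance. The only difference is in the prevalence identity. The paper computes $\bm{\pi}^\top\taub^\star=\mathbf 1^\top Q_0\taub^\star=\E[\delta(H_i)]$ directly, whereas you show $\bm{\pi}^\top Q_0^{-1}\bm c=\mathbf 1^\top\bm c=0$; both arguments rest on $Q_0\mathbf 1=\bm{\pi}$.
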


\begin{proof}
Equations~\eqref{eq:app-sufficiency-decomposition} and
\eqref{eq:app-sufficiency-projection} give
\[
  Q_0(\taub^\star-\taub_0)
  =\E[\qb_0(H_i)r(H_i)].
\]
For coordinate \(t\), apply
\eqref{eq:app-sufficiency-reweighting} to each term in \(r(H_i)\):
\[
  \E[q_{0t}(H_i)r(H_i)]
  =\sum_{s=1}^K\pi_s
    \E\!\left[
      q_{0t}(H_i)\{\tau_s(H_i)-\tau_{0s}\}
      \mid Z_i=s
    \right].
\]
Because \(\E[\tau_s(H_i)\mid Z_i=s]=\tau_{0s}\), each conditional
expectation is the covariance in \eqref{eq:app-sufficiency-bias}.
Finite outcome second moments make these terms integrable.

The posterior coordinates sum to one, so
\(Q_0\mathbf 1=\E[\qb_0(H_i)]=\bm{\pi}\). Hence
\[
  \bm{\pi}^\top\taub^\star
  =\mathbf 1^\top Q_0\taub^\star
  =\E[\delta(H_i)]
  =\E[\Delta_i]
  =\bm{\pi}^\top\taub_0.
\]
\end{proof}

Within-state effect variation alone does not force bias. A sufficient
condition for \(\taub^\star=\taub_0\) is that, within every true state,
\(\tau_s(H_i)\) is uncorrelated with each posterior coordinate. Bias can
arise when histories that make a state more likely also change treatment
response among members of that state. For example, members with a
large \(q_{0s}(H)\) might respond more strongly than less typical
members of the same true state.

Because \(\opnorm{Q_0^{-1}}=1/\lambda_{\min}(Q_0)\), weak posterior rank
can amplify the covariance vector in \eqref{eq:app-sufficiency-bias}.
The resulting error redistributes effect across states: some state
effects can be too high and others too low, but their prevalence-weighted
average remains the overall ATE.

\paragraph{A sensitivity bound from posterior spread.}
Let
\(\sigma_{ts}=\operatorname{std}\big(q_{0t}(H_i)\mid Z_i=s\big)\).
Equation~\eqref{eq:app-sufficiency-reweighting} identifies this
within-state posterior spread from the distribution of the true
posteriors:
\[
  \sigma_{ts}^2
  =\frac{\E[q_{0s}(H_i)q_{0t}(H_i)^2]}{\pi_s}
   -\left\{
      \frac{\E[q_{0s}(H_i)q_{0t}(H_i)]}{\pi_s}
    \right\}^{\!2}.
\]
Suppose the analyst posits
\(\operatorname{std}\big(\tau_s(H_i)\mid Z_i=s\big)\leq\Gamma\) for every state.
Cauchy--Schwarz applied to \eqref{eq:app-sufficiency-bias} gives
\begin{equation}
  \begin{aligned}
    b_k&=\sum_{t=1}^K |(Q_0^{-1})_{kt}|
      \sum_{s=1}^K\pi_s\sigma_{ts},\\
    |\tau_k^\star-\tau_{0k}|&\leq\Gamma b_k.
  \end{aligned}
  \label{eq:app-sufficiency-bound}
\end{equation}
All terms in \(b_k\) depend only on the posterior distribution and can
be estimated from fitted target posteriors. The bound is conservative:
it allows the covariance terms to attain their largest magnitudes with
the signs that maximize the error in state \(k\).

\section{When the fitted number of states is incorrect}
\label{app:wrong-K}

The main identification and inference results assume a correctly specified
HMM with the true number of states. Here we describe the population target
of posterior unmixing when the fitted state count differs from the true
count. The same characterization applies to other misspecified state models.

Let \(K_0\) be the true state count and \(K\) the fitted count. The
main-text posterior \(\qb_0\), effect vector \(\taub_0\), and
mean posterior \(\bm{\pi}\) have
\(K_0\) coordinates here; the mixture identity \eqref{eqn:mixture}
continues to hold. We retain randomized treatment, consistency, and
\(\E[Y_i(a)^2]<\infty\) for \(a\in\{0,1\}\).

Suppose an HMM fitted on independent histories yields, after a fixed
label rule, posterior maps converging in \(L_2\) to a deterministic
map \(\qb^*(H)=(q_1^*(H),\ldots,q_K^*(H))^\top\) whose coordinates
are nonnegative and sum to one:
\[
  \norm{\qb_{\widehat\eta}-\qb^*}_{L_2}\to_p0.
\]
Define the fitted posterior Gram matrix and its mean:
\[
  Q^*=\E[\qb^*(H_i)\qb^*(H_i)^\top],\qquad
  \bm{\pi}^*=\E[\qb^*(H_i)].
\]
Assume \(Q^*\) is nonsingular.
We impose no merge or refinement relation between \(\qb^*\) and
\(\qb_0\).

\begin{proposition}[Projection target and overall average effect]
\label{prop:wk-blp}
The plug-in estimator in \eqref{eq:unmix-plugin} converges to
\begin{equation}
  \taub^\star=(Q^*)^{-1}\E[\qb^*(H_i)\delta(H_i)]
  =\argmin_{\taub\in\R^K}
     \E\bigl[\{\delta(H_i)-\qb^*(H_i)^\top\taub\}^2\bigr].
  \label{eq:wk-target}
\end{equation}
Thus \(\qb^*(H)^\top\taub^\star\) is the least-squares approximation
of \(\delta(H)\) from the fitted posterior coordinates, and
\begin{equation}
  (\bm{\pi}^*)^\top\taub^\star
  =\E[Y_i(1)-Y_i(0)].
  \label{eq:wk-overall-ate}
\end{equation}
Under Assumption~\ref{assum:effect_sufficiency}, define
\[
  \Lambda=(Q^*)^{-1}
       \E[\qb^*(H_i)\qb_0(H_i)^\top]
       \in\R^{K\times K_0}.
\]
Then \(\taub^\star=\Lambda\taub_0\) and
\(\Lambda\mathbf{1}_{K_0}=\mathbf{1}_K\).
The rows of \(\Lambda\) need not be nonnegative.
\end{proposition}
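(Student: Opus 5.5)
The plan is to prove the three claims in turn: convergence to the projection target, the overall-ATE identity, and the mixing representation under Assumption~\ref{assum:effect_sufficiency}.

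\emph{Convergence.} Write the plug-in estimator \eqref{eq:unmix-plugin} as \(\widehat M^{-1}\widehat{\bm m}\), where \(\widehat{\bm m}=\E_n[\widehat{\qb}\widetilde A Y]\). For \(\widehat M\), reuse the decomposition from the proof of Theorem~\ref{thm:orthogonal-unmixing}, with \(\qb^*\) now in the role of \(\qb_0\):
\[
\widehat M-v_AQ^*=(\E_n-\E)[\widetilde A^2\qb^*\qb^{*\top}]+\E_n[g_{\widehat\eta}-g^*].
\]
The first term is \(O_p(n^{-1/2})\) because it is bounded. For the second, the simplex bound \eqref{eq:app-unmix-gram-bound} gives an operator-norm bound of \(2\E_n\norm{\qb_{\widehat\eta}-\qb^*}\). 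Conditional on the independent training sample, its expectation is at most \(2\norm{\qb_{\widehat\eta}-\qb^*}_{L_2}=o_p(1)\), so conditional Markov's inequality applies. Nonsingularity of \(Q^*\) then gives \(\widehat M^{-1}\to_p (v_AQ^*)^{-1}\).

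For the numerator, the law of large numbers together with \eqref{eq:unmix-randomization} (which needs only randomization and consistency) gives
\[
\E_n[\qb^*\widetilde AY]\to_p v_A\E[\qb^*\delta].
\]
The remaining term \(\E_n[(\qb_{\widehat\eta}-\qb^*)\widetilde AY]\) is handled by Cauchy--Schwarz. Its conditional mean norm is at most \(\norm{\qb_{\widehat\eta}-\qb^*}_{L_2}\,\norm{Y}_{L_2}=o_p(1)\), using finite second moments. The continuous mapping theorem then yields \(\taub^\star=(Q^*)^{-1}\E[\qb^*\delta]\). This is exactly the normal equation of the least-squares problem in \eqref{eq:wk-target}, and the minimizer is unique because \(Q^*\succ0\).

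\emph{Overall ATE.} Since the coordinates of \(\qb^*\) sum to one, \(Q^*\mathbf 1_K=\E[\qb^*]=\bm{\pi}^*\). Therefore
\[
(\bm{\pi}^*)^\top\taub^\star=\mathbf 1_K^\top Q^*\taub^\star=\mathbf 1_K^\top\E[\qb^*\delta]=\E[\delta(H_i)]=\E[Y_i(1)-Y_i(0)],
\]
where the last equality is iterated expectation. This mirrors the argument in Proposition~\ref{prop:unmix-effect-variation}.

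\emph{Mixing matrix.} Under sufficiency, \(\delta=\qb_0^\top\taub_0\) by \eqref{eqn:mixture}. Hence \(\E[\qb^*\delta]=\E[\qb^*\qb_0^\top]\taub_0\), and so \(\taub^\star=\Lambda\taub_0\). For the row sums, \(\qb_0^\top\mathbf 1_{K_0}=1\) gives \(\E[\qb^*\qb_0^\top]\mathbf 1_{K_0}=\bm{\pi}^*=Q^*\mathbf 1_K\), and therefore \(\Lambda\mathbf 1_{K_0}=\mathbf 1_K\).

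To show that the rows need not be nonnegative, give a small example. Take \(K_0=1\), so \(\Lambda\) has a single column that must equal \(\mathbf 1_K\); this is trivially nonnegative. A sign change therefore requires a genuine example with \(K_0=2\) and \(K=2\) and a poorly aligned \(\qb^*\). One candidate is \(\qb^*\) being a noisy or affinely distorted function of \(\qb_0\), for which the regression coefficients of \(q_{0s}\) on \(\qb^*\) turn negative. We also need to note that negativity is not ruled out by the constraints proved above.

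The only real obstacle is the random, dependent-on-training-sample \(\qb_{\widehat\eta}\). It is handled by conditioning on the independent training fold, exactly as in the proof of Theorem~\ref{thm:orthogonal-unmixing}. No HMM correct-specification property is needed, since only \(L_2\) convergence to \(\qb^*\) is used.
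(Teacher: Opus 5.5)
\textbf{What is correct.} Your arguments for the three identities are sound and follow the paper's proof essentially step for step: the convergence to the projection target, $(\bm{\pi}^*)^\top\taub^\star=\E[Y_i(1)-Y_i(0)]$ via $Q^*\mathbf 1_K=\bm{\pi}^*$, and $\taub^\star=\Lambda\taub_0$ with $\Lambda\mathbf 1_{K_0}=\mathbf 1_K$ by substituting the mixture identity. You spell out the conditional Markov and Cauchy--Schwarz steps that the paper compresses into ``conditional laws of large numbers.''

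\textbf{The gap.} The final sentence of the statement, ``the rows of $\Lambda$ need not be nonnegative,'' is an existence claim. Your proposal only names a ``candidate'' and never exhibits one, so this part is unproved.

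\textbf{The ``noisy'' version of your candidate fails.} Take $K_0=K=2$, $\qb_0=(r,1-r)^\top$ and $\qb^*=(p,1-p)^\top$. The columns of $\Lambda$ are the coefficients from regressing $q_{0s}$ on $\qb^*$. Write $\alpha+\beta p$ for the best linear predictor of $r$ given $p$. Then the rows of $\Lambda$ are $(\alpha+\beta,\,1-\alpha-\beta)$ and $(\alpha,\,1-\alpha)$. If $p=r+\varepsilon$ with independent mean-zero noise, then $\beta\in(0,1]$ and $\alpha=(1-\beta)\E r$. Hence every entry of $\Lambda$ lies in $[0,1]$, and classical noise only attenuates.

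\textbf{The fix: use shrinkage.} A sign change needs the linear predictor to leave $[0,1]$ at $p=0$ or $p=1$, and your ``affine distortion'' idea delivers exactly this. Let $\qb^*=\tfrac12\qb_0+\tfrac14\mathbf 1_2$ with $r$ nondegenerate, so $Q^*$ is nonsingular. Using $\mathbf 1^\top\qb^*=1$, we get $\qb_0=C\qb^*$ with $C=2I-\tfrac12\mathbf 1\mathbf 1^\top$. Hence $\Lambda=(Q^*)^{-1}Q^*C^\top=C$, whose off-diagonal entries equal $-1/2$. With $\taub_0=(0,1)^\top$ this gives $\taub^\star=(-1/2,3/2)^\top$, a negative coefficient even though both true effects are nonnegative.

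\textbf{The paper's route.} The paper instead supplies Example~\ref{ex:wk-negative}, with $K=2$ and $K_0=3$: $\qb^*=(h,1-h)^\top$ and $\qb_0=(h^2,2h(1-h),(1-h)^2)^\top$ for $h$ uniform. There the nonlinearity of $\qb_0$ in $h$ produces entries $-1/6$, even though $p=h$ covers all of $[0,1]$.

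\textbf{A smaller correction.} A sign change does not ``require'' $K_0=K=2$. It requires only $K_0\ge2$ and $K\ge2$: for $K=1$, $\Lambda=\bm{\pi}^\top$, and for $K_0=1$, $\Lambda=\mathbf 1_K$.
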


\begin{proof}
By \eqref{eq:unmix-plugin}, the assumed \(L_2\) convergence, and the
randomization identities in \eqref{eq:unmix-randomization}, conditional
laws of large numbers give
\[
  \widehat M\to_p v_AQ^*,\qquad
  \frac1n\sum_{i=1}^n\widehat{\qb}_i\widetilde A_iY_i
    \to_p v_A\E[\qb^*(H_i)\delta(H_i)].
\]
Since \(Q^*\) is nonsingular, the estimator converges to
\(\taub^\star\) in \eqref{eq:wk-target}. The least-squares claim follows
from the normal equation
\(Q^*\taub=\E[\qb^*(H_i)\delta(H_i)]\).

Posterior coordinates sum to one, so
\(Q^*\mathbf{1}_K=\bm{\pi}^*\). Thus
\[
  (\bm{\pi}^*)^\top\taub^\star
  =\mathbf{1}_K^\top\E[\qb^*(H_i)\delta(H_i)]
  =\E[\delta(H_i)]=\E[Y_i(1)-Y_i(0)].
\]
Substituting \(\delta(H_i)=\qb_0(H_i)^\top\taub_0\) from
\eqref{eqn:mixture} into \eqref{eq:wk-target} gives
\[
  \taub^\star
  =(Q^*)^{-1}\E[\qb^*(H_i)\qb_0(H_i)^\top]\taub_0
  =\Lambda\taub_0.
\]
Since
\(\qb_0^\top\mathbf{1}_{K_0}=1\),
\(\Lambda\mathbf{1}_{K_0}=(Q^*)^{-1}\bm{\pi}^*
=\mathbf{1}_K\).
\end{proof}

The weighting in \eqref{eq:wk-overall-ate} uses the \emph{mean fitted
posterior} \(\bm{\pi}^*\). The true posterior mean \(\bm{\pi}\) is the
true-state prevalence vector. Under misspecification, the fitted mean need not
represent true-state prevalences and may even have a different dimension.
Likewise, \(\Lambda\) is a linear coefficient map, not necessarily a matrix of probabilities.  Below is an example that gives negative entries of \(\Lambda\).

\begin{example}[Negative coefficient weights]
\label{ex:wk-negative}
Let \(h(H)\sim\operatorname{Unif}(0,1)\), and consider the valid
simplex-valued maps
\[
  \qb^*=(h,1-h)^\top,\qquad
  \qb_0=(h^2,\,2h(1-h),\,(1-h)^2)^\top.
\]
Linear combinations of \(q_1^*=h\) and \(q_2^*=1-h\) are affine in
\(h\). Regressing the coordinates of \(\qb_0\) on these fitted
posterior coordinates gives
\[
  \Lambda=
  \begin{pmatrix}
    5/6 & 1/3 & -1/6\\
   -1/6 & 1/3 & 5/6
  \end{pmatrix}.
\]
For \(\taub_0=(0,0,1)^\top\), the fitted coefficients are
\(\taub^\star=(-1/6,5/6)^\top\): one is negative although all true-state
effects are nonnegative. The least-squares reconstruction of \(\qb_0\)
is not itself a posterior: its first coordinate is
\(h-1/6<0\) for \(h<1/6\).
\end{example}

\section{Oracle-posterior simulation details}
\label{app:simulations}

\subsection{Data-generating process and metrics}
\label{app:simulation-design}

Let \(K=3\), draw \(Z_i\in\{1,2,3\}\) with state probabilities
\(\bm{\pi}=(\pi_1,\pi_2,\pi_3)^\top\), and set
\(\taub_0=(-1,0,1)^\top\). The pre-treatment history is a single
categorical signal \(H_i\in\{1,2,3\}\), with state-conditional accuracy
\(\rho=\PP(H_i=Z_i\mid Z_i=s)\) common to every state \(s\):
\[
\Pr(H_i=h\mid Z_i=s)=
\begin{cases}
\rho, & h=s,\\
(1-\rho)/2, & h\ne s.
\end{cases}
\]
This signal creates ambiguity about \(Z_i\) even though its posterior
is known exactly. By iterated expectation, its mean posterior is the
state prior: \(\E[\qb_0(H_i)]=\bm{\pi}\). No HMM trajectory or HMM
parameter is simulated.

Independently across units, draw
\(\veps_i\sim\mathcal N(0,1)\), independently of \((Z_i,H_i)\), and
randomize \(A_i\sim\operatorname{Bernoulli}(e)\) independently of these
variables, with \(e=1/2\). With
\(\widetilde A_i=A_i-e\), the outcome is
\[
Y_i=\widetilde A_i\tau_{0,Z_i}+\veps_i.
\]
Thus \(Y_i(a)=(a-e)\tau_{0,Z_i}+\veps_i\),
\(\E[Y_i(1)-Y_i(0)\mid Z_i=s,H_i]=\tau_{0s}\), and
\(\E[Y_i\mid H_i]=0\). Hard and soft contrasts use the exact
\(\qb_0(H_i)\); unmixing uses \eqref{eq:unmix-plugin} directly.
We also report an
unattainable true-state oracle: it observes each simulated \(Z_i\)
and takes the treated-minus-control mean within each state.

For each condition, we use
\(n\in\{500,2{,}000,10{,}000,50{,}000\}\) and \(R=1{,}000\)
independent replications. For method-specific estimate
\(\widehat\tau_{rk}\) of \(\tau_{0k}\) in replication \(r\), the
primary metric gives each state equal weight:
\[
\operatorname{RMSE}_{\mathrm{state}}
=\left\{\frac{1}{RK}
\sum_{r=1}^{R}\sum_{k=1}^{K}
(\widehat\tau_{rk}-\tau_{0k})^2\right\}^{1/2}.
\]
We also examine state-specific bias and RMSE, interval width, coverage
of nominal \(95\%\) confidence intervals, and the condition number of
the population design matrix \(M_0\). Let
\(\widehat{\operatorname{SE}}_k\) be a method's estimated account-level
standard error for state \(k\). Each interval uses the normal approximation
\(\widehat\tau_k\pm1.96\widehat{\operatorname{SE}}_k\).
The oracle, hard, and soft methods use the empirical treated- and
control-arm variances of their weighted means. Unmixing uses the
empirical sandwich variance of its residualized-treatment moment.
Coverage is the proportion of state-by-replication intervals containing
the true \(\tau_{0k}\). Parenthetical Monte Carlo standard errors (SEs) in the
tables quantify uncertainty in these simulation summaries.

\subsection{Balanced states and posterior overlap}
\label{app:sim-overlap}

Set \(\pi_1=\pi_2=\pi_3=1/3\) and vary
\(\rho\in\{0.40,0.60,0.80,0.95,1.00\}\). For \(\rho>1/3\), the
maximum a posteriori (MAP)
state is \(H_i\), with classification accuracy \(\rho\). Bayes' rule
gives
\[
q_{0k}(h)=\Pr(Z_i=k\mid H_i=h)
=\begin{cases}
\rho, & k=h,\\
(1-\rho)/2, & k\ne h.
\end{cases}
\]
Let \(d_\rho=(3\rho-1)/2\). Applying
Proposition~\ref{prop:naive-estimator-bias} coordinatewise yields
\[
\begin{aligned}
\widehat{\taub}^{\,\mathrm{hard}}&\xrightarrow{p}
  d_\rho\taub_0,\\
\widehat{\taub}^{\,\mathrm{soft}}&\xrightarrow{p}
  d_\rho^2\taub_0.
\end{aligned}
\]
For \(\rho<1\), both estimators attenuate the nonzero state effects,
with greater attenuation under soft weighting. These errors persist
as \(n\) grows.

The population Gram matrix
\(M_0=v_A\E[\qb_0(H_i)\qb_0(H_i)^\top]\) has eigenvalues
\(v_A/3\) and \(v_Ad_\rho^2/3\), the latter with multiplicity two.
Consequently, the spectral condition number is
\[
\operatorname{cond}(M_0)
=d_\rho^{-2}
=\frac{4}{(3\rho-1)^2}.
\]
The inverse amplifies sampling noise as \(\rho\) approaches the
uninformative boundary \(1/3\). At \(\rho=0.40\), this condition number
is \(100\).

At \(\rho=0.40\), unmixing RMSE falls from 1.386 to 0.679, 0.309,
and 0.139 as \(n\) increases from 500 to 2,000, 10,000, and
50,000; hard-assignment RMSE remains near 0.74. Thus the
variance cost of unmixing is most visible at small \(n\).

Table~\ref{tab:overlap-results} shows coverage of the true state effects
at \(n=50{,}000\). Across all simulated sample sizes and signal
accuracies, state-specific unmixing coverage ranges from 93.3\% to
96.8\%, and its equal-state average ranges from 94.2\% to 95.9\%.
For \(\rho<1\) at \(n=50{,}000\), hard and soft intervals cover the true
effects only 31--32\% of the time, consistent with their bias for the
latent-state effects.
Even with the fairly accurate signal \(\rho=0.95\), hard and soft
true-effect coverage falls from 92.6\% and 86.8\% at \(n=500\)
to 31.5\% and 31.4\% at \(n=50{,}000\), as their intervals narrow
around the mixture limits.

At \(\rho=1\), \(H_i=Z_i\) and
\(q_{0k}(H_i)=\ind\{Z_i=k\}\). Hard and soft contrasts then use the
same true-state groups. Residualized unmixing can differ in a finite
sample because it uses the known randomization probability \(e\),
whereas a difference in means uses the realized arm counts. They are
asymptotically equivalent in this design. At \(n=50{,}000\), the three
RMSEs in Table~\ref{tab:overlap-results} and the true-state oracle RMSE
in Table~\ref{tab:overlap-rmse-full} all round to \(0.015\).

Tables~\ref{tab:overlap-rmse-full} and
\ref{tab:overlap-coverage-full} give the complete grids across sample
sizes and signal accuracies, with Monte Carlo standard errors.

\begin{table*}[t]
\centering
\caption{Equal-state RMSE in the oracle-posterior experiment, based on
1,000 replications per condition. Entries are RMSE (Monte Carlo
standard error). The oracle observes the true state; all other methods
use the exact Bayes posterior.}
\label{tab:overlap-rmse-full}
\small
\setlength{\tabcolsep}{4pt}
\begin{tabular}{rlcccc}
\toprule
\(n\) & Signal accuracy \(\rho\) & Oracle & Unmixing & Hard & Soft \\
\midrule
500 & 0.40 & 0.1515 (0.0020) & 1.3861 (0.020) & 0.7601 (0.0031) & 0.8143 (0.00037) \\
500 & 0.60 & 0.1552 (0.0020) & 0.3507 (0.0052) & 0.5165 (0.0029) & 0.6949 (0.0012) \\
500 & 0.80 & 0.1533 (0.0020) & 0.2074 (0.0027) & 0.2935 (0.0027) & 0.4382 (0.0020) \\
500 & 0.95 & 0.1562 (0.0019) & 0.1659 (0.0020) & 0.1700 (0.0021) & 0.1924 (0.0023) \\
500 & 1.00 & 0.1520 (0.0019) & 0.1513 (0.0019) & 0.1520 (0.0019) & 0.1520 (0.0019) \\
\addlinespace[4pt]
2,000 & 0.40 & 0.0774 (0.00100) & 0.6789 (0.011) & 0.7394 (0.0015) & 0.8097 (0.00016) \\
2,000 & 0.60 & 0.0791 (0.0011) & 0.1742 (0.0025) & 0.4994 (0.0015) & 0.6893 (0.00060) \\
2,000 & 0.80 & 0.0784 (0.0010) & 0.1040 (0.0014) & 0.2561 (0.0014) & 0.4206 (0.0010) \\
2,000 & 0.95 & 0.0759 (0.0010) & 0.0809 (0.0010) & 0.0995 (0.0012) & 0.1404 (0.0012) \\
2,000 & 1.00 & 0.0771 (0.00099) & 0.0771 (0.00099) & 0.0771 (0.00099) & 0.0771 (0.00099) \\
\addlinespace[4pt]
10,000 & 0.40 & 0.0350 (0.00045) & 0.3085 (0.0047) & 0.7366 (0.00071) & 0.8087 (0.000072) \\
10,000 & 0.60 & 0.0347 (0.00044) & 0.0782 (0.0012) & 0.4908 (0.00069) & 0.6861 (0.00028) \\
10,000 & 0.80 & 0.0347 (0.00046) & 0.0479 (0.00065) & 0.2472 (0.00065) & 0.4172 (0.00046) \\
10,000 & 0.95 & 0.0348 (0.00047) & 0.0373 (0.00048) & 0.0707 (0.00059) & 0.1226 (0.00057) \\
10,000 & 1.00 & 0.0346 (0.00044) & 0.0346 (0.00044) & 0.0346 (0.00044) & 0.0346 (0.00044) \\
\addlinespace[4pt]
50,000 & 0.40 & 0.0158 (0.00020) & 0.1393 (0.0021) & 0.7349 (0.00032) & 0.8084 (0.000032) \\
50,000 & 0.60 & 0.0152 (0.00020) & 0.0347 (0.00051) & 0.4897 (0.00029) & 0.6858 (0.00012) \\
50,000 & 0.80 & 0.0155 (0.00019) & 0.0214 (0.00028) & 0.2453 (0.00030) & 0.4165 (0.00021) \\
50,000 & 0.95 & 0.0150 (0.00021) & 0.0160 (0.00022) & 0.0633 (0.00027) & 0.1189 (0.00025) \\
50,000 & 1.00 & 0.0154 (0.00019) & 0.0154 (0.00019) & 0.0154 (0.00019) & 0.0154 (0.00019) \\
\bottomrule
\end{tabular}
\end{table*}

\begin{table*}[t]
\centering
\caption{Coverage of the true latent-state effects by nominal 95\%
intervals in the oracle-posterior experiment, based on 1,000
replications per condition. Entries are coverage (Monte Carlo standard
error), averaged equally across the three states.}
\label{tab:overlap-coverage-full}
\small
\setlength{\tabcolsep}{5pt}
\begin{tabular}{rlcccc}
\toprule
\(n\) & Signal accuracy \(\rho\) & Oracle & Unmixing & Hard & Soft \\
\midrule
500 & 0.40 & 0.955 (0.0038) & 0.949 (0.0044) & 0.314 (0.0024) & 0.323 (0.0018) \\
500 & 0.60 & 0.942 (0.0043) & 0.947 (0.0045) & 0.356 (0.0044) & 0.318 (0.0022) \\
500 & 0.80 & 0.947 (0.0041) & 0.953 (0.0038) & 0.671 (0.0078) & 0.330 (0.0032) \\
500 & 0.95 & 0.944 (0.0041) & 0.948 (0.0041) & 0.926 (0.0048) & 0.868 (0.0059) \\
500 & 1.00 & 0.956 (0.0037) & 0.956 (0.0037) & 0.956 (0.0037) & 0.956 (0.0037) \\
\addlinespace[4pt]
2,000 & 0.40 & 0.950 (0.0041) & 0.949 (0.0046) & 0.319 (0.0022) & 0.321 (0.0020) \\
2,000 & 0.60 & 0.939 (0.0044) & 0.950 (0.0042) & 0.314 (0.0025) & 0.312 (0.0026) \\
2,000 & 0.80 & 0.947 (0.0040) & 0.952 (0.0040) & 0.343 (0.0036) & 0.318 (0.0022) \\
2,000 & 0.95 & 0.954 (0.0038) & 0.954 (0.0038) & 0.884 (0.0057) & 0.649 (0.0076) \\
2,000 & 1.00 & 0.952 (0.0039) & 0.953 (0.0039) & 0.952 (0.0039) & 0.952 (0.0039) \\
\addlinespace[4pt]
10,000 & 0.40 & 0.945 (0.0042) & 0.949 (0.0045) & 0.320 (0.0021) & 0.316 (0.0024) \\
10,000 & 0.60 & 0.949 (0.0040) & 0.952 (0.0042) & 0.316 (0.0023) & 0.314 (0.0024) \\
10,000 & 0.80 & 0.945 (0.0041) & 0.943 (0.0043) & 0.311 (0.0026) & 0.312 (0.0026) \\
10,000 & 0.95 & 0.949 (0.0042) & 0.942 (0.0043) & 0.600 (0.0075) & 0.321 (0.0030) \\
10,000 & 1.00 & 0.948 (0.0040) & 0.949 (0.0040) & 0.948 (0.0040) & 0.948 (0.0040) \\
\addlinespace[4pt]
50,000 & 0.40 & 0.948 (0.0041) & 0.951 (0.0044) & 0.317 (0.0023) & 0.318 (0.0022) \\
50,000 & 0.60 & 0.956 (0.0038) & 0.953 (0.0042) & 0.317 (0.0023) & 0.316 (0.0023) \\
50,000 & 0.80 & 0.951 (0.0040) & 0.948 (0.0043) & 0.318 (0.0022) & 0.319 (0.0022) \\
50,000 & 0.95 & 0.957 (0.0038) & 0.959 (0.0036) & 0.315 (0.0025) & 0.314 (0.0025) \\
50,000 & 1.00 & 0.948 (0.0040) & 0.948 (0.0040) & 0.948 (0.0040) & 0.948 (0.0040) \\
\bottomrule
\end{tabular}
\end{table*}

\subsection{State imbalance with oracle posterior}
\label{app:sim-imbalance}

We fix \(\rho=0.80\) and vary the state prior \(\bm{\pi}\):
\[
\begin{aligned}
\bm{\pi}^{\,\mathrm{balanced}}&=(1/3,1/3,1/3)^\top,\\
\bm{\pi}^{\,\mathrm{moderate}}&=(0.50,0.30,0.20)^\top,\\
\bm{\pi}^{\,\mathrm{severe}}&=(0.70,0.20,0.10)^\top.
\end{aligned}
\]
The signal channel, treatment assignment, outcome model, and
\(\taub_0\) remain fixed. For each prior, we recompute the exact
posterior by Bayes' rule:
\[
q_{0k}(h)=
\frac{\pi_k\Pr(H_i=h\mid Z_i=k)}
{\sum_{\ell=1}^K\pi_\ell\Pr(H_i=h\mid Z_i=\ell)}.
\]
For all three priors, the MAP state remains \(H_i\), so classification
accuracy stays at \(0.80\). The designs therefore change state
prevalence and posterior geometry while holding signal accuracy fixed.
The marginal treatment effect also changes because
\(\bm{\pi}^\top\taub_0\) depends on the prior.

Alongside equal-state RMSE, we report prevalence-weighted RMSE:
\[
\operatorname{RMSE}_{\bm{\pi}}
=\left\{\frac1R\sum_{r=1}^R\sum_{k=1}^K
\pi_k(\widehat\tau_{rk}-\tau_{0k})^2\right\}^{1/2}.
\]
The first metric gives each state equal weight; the second gives more
weight to states that are more common in the target population.

Table~\ref{tab:imbalance-rmse} reports results at \(n=50{,}000\).
From the balanced to the severe prior,
\(\operatorname{cond}(M_0)\) increases from \(2.04\) to \(18.12\),
and unmixing's equal-state RMSE rises from \(0.021\) to \(0.037\).
Its coverage remains near \(95\%\). Hard and soft RMSE rise more
sharply as their mixture targets move away from the state effects.
Under severe imbalance, unmixing's equal-state RMSE decreases from
0.353 to 0.184, 0.079, and 0.037 across the four sample sizes,
while average true-effect coverage remains between 94.8\% and
95.5\%.

\begin{table}[t]
\centering
\caption{State imbalance at \(n=50{,}000\) and \(\rho=0.80\), based on
1,000 replications. The three method columns report equal-state RMSE;
coverage is for unmixing's nominal \(95\%\) intervals, averaged
across states.}
\label{tab:imbalance-rmse}
\small
\setlength{\tabcolsep}{5pt}
\begin{tabular}{lcccccc}
\toprule
State prior
& \(\min_k\pi_k\)
& \(\operatorname{cond}(M_0)\)
& Unmixing
& Hard
& Soft
& Coverage \\
\midrule
Balanced & 0.333 & 2.04  & 0.021 & 0.245 & 0.417 & 0.948 \\
Moderate & 0.200 & 4.49  & 0.024 & 0.331 & 0.474 & 0.947 \\
Severe   & 0.100 & 18.12 & 0.037 & 0.565 & 0.655 & 0.948 \\
\bottomrule
\end{tabular}
\end{table}

Table~\ref{tab:imbalance-weighted-rmse} reports prevalence-weighted
RMSE at every sample size, with Monte Carlo standard errors. Under
balance, the equal-state and prevalence-weighted metrics coincide by
definition. At \(n=50{,}000\), unmixing's prevalence-weighted RMSE
under severe imbalance is \(0.0244\), versus \(0.037\) when states are
weighted equally.

\begin{table*}[t]
\centering
\caption{Prevalence-weighted RMSE in the oracle-posterior state-imbalance
experiment at \(\rho=0.80\), based on 1,000 replications per condition.
Monte Carlo standard errors appear in parentheses. Each state's squared
error receives weight \(\pi_k\); the oracle observes the true state.}
\label{tab:imbalance-weighted-rmse}
\small
\setlength{\tabcolsep}{4pt}
\begin{tabular}{rlcccc}
\toprule
\(n\) & State prior & Oracle & Unmixing & Hard & Soft \\
\midrule
500 & Balanced & 0.1533 (0.0020) & 0.2074 (0.0027) & 0.2935 (0.0027) & 0.4382 (0.0020) \\
    & Moderate & 0.1548 (0.0021) & 0.2158 (0.0030) & 0.3146 (0.0026) & 0.4389 (0.0020) \\
    & Severe   & 0.1557 (0.0020) & 0.2358 (0.0035) & 0.3557 (0.0024) & 0.4312 (0.0018) \\
\addlinespace
2,000 & Balanced & 0.0784 (0.0010) & 0.1040 (0.0014) & 0.2561 (0.0014) & 0.4206 (0.0010) \\
      & Moderate & 0.0784 (0.0010) & 0.1083 (0.0014) & 0.2817 (0.0014) & 0.4225 (0.0010) \\
      & Severe   & 0.0785 (0.00098) & 0.1217 (0.0017) & 0.3312 (0.0012) & 0.4165 (0.00092) \\
\addlinespace
10,000 & Balanced & 0.0347 (0.00046) & 0.0479 (0.00065) & 0.2472 (0.00065) & 0.4172 (0.00046) \\
       & Moderate & 0.0347 (0.00045) & 0.0479 (0.00067) & 0.2730 (0.00061) & 0.4188 (0.00043) \\
       & Severe   & 0.0343 (0.00046) & 0.0528 (0.00073) & 0.3245 (0.00054) & 0.4129 (0.00040) \\
\addlinespace
50,000 & Balanced & 0.0155 (0.00019) & 0.0214 (0.00028) & 0.2453 (0.00030) & 0.4165 (0.00021) \\
       & Moderate & 0.0158 (0.00021) & 0.0220 (0.00030) & 0.2720 (0.00029) & 0.4185 (0.00021) \\
       & Severe   & 0.0159 (0.00021) & 0.0244 (0.00034) & 0.3238 (0.00025) & 0.4127 (0.00019) \\
\bottomrule
\end{tabular}
\end{table*}

The rarest state is particularly difficult. Under severe imbalance,
its unmixing RMSE is \(0.054\), with coverage \(0.951\). Prevalence
weighting reduces its contribution because its population share is
only \(0.10\).

Average posterior entropy, measured in nats, falls from \(0.639\)
under balance to \(0.486\) under severe imbalance, even as
\(\operatorname{cond}(M_0)\) rises. The majority-state prior makes
posteriors more concentrated, but observations informative about the
rare state become scarcer. Neither entropy nor MAP accuracy alone
therefore describes the precision of every state effect.

At \(n=50{,}000\), hard and soft true-effect coverage is 31.8\% and
31.9\% under balance, but zero for both methods under moderate and
severe imbalance. True-state oracle coverage is 94.1--95.1\% across
the three priors at this sample size. The hard and soft intervals
concentrate around mixture effects rather than the target state effects.

\section{Estimated-HMM simulation details}
\label{app:sim-hmm}

\subsection{Design and evaluation}
\label{app:sim-hmm-design}

This experiment adds HMM estimation error to the posterior ambiguity
studied above. We generate \(T=21\) days of pre-treatment activity,
action, and viewing intensity from a five-state joint HMM with the same
layers as the application model. Initial state probabilities are
uniform. Each transition row has probability 0.76 of remaining in the
same state and 0.06 of moving to each other state. The activity
intercepts are \(\thetab=(-0.8,-0.4,0,0.4,0.8)^\top\); each state has
a distinct modal action with probability 0.70; and the ordered
intensity coefficients are \((\kappa_s)_{s=1}^5=(-2,-1,0,1,2)\).

In each of 200 replications, we fit the correctly specified five-state
HMM by three-start EM on nested panels of
\(N\in\{300,600,1{,}200,2{,}400\}\) units. These HMM training
units are drawn from the target history distribution and are
independent of the randomized target panel. We analyze
prefixes of \(n\in\{2{,}000,10{,}000,50{,}000\}\) units from that
panel; each prefix supplies its own correction statistics and effect
moments. Another 10,000 target units are held out only for
conditional-effect prediction. Methods share HMM fits, treatment
assignments, and outcome errors within each replication.

Treatment is randomized with \(e=1/2\), and outcomes follow
\begin{equation*}
  Y_i=(A_i-1/2)\tau_{0,Z_{iT}}+\veps_i,
  \qquad \veps_i\sim\mathcal N(0,1).
\end{equation*}
The history-conditional mean outcome is zero. The primary effect profile
is \(\taub_0=(-1.2,-0.6,0,0.6,1.2)^\top\), ordered by the true
activity intercept. The nonmonotone and constant control profiles are
\((0.6,-1.2,1.2,-0.6,0)^\top\) and
\((0.6,0.6,0.6,0.6,0.6)^\top\), respectively.

We order fitted states by the rank of their estimated activity
intercept \(\widehat\theta_k\), without consulting treatment outcomes.
Holding the unit offset and weekday fixed, a larger intercept
implies a higher daily activity probability. We apply this permutation
to fitted parameters, posterior columns, scores, and derivatives; true
states use the same activity order. An outcome-blind match of true and
fitted target posteriors, used only as a diagnostic, agrees with this
rule in all 200 replications at the two primary HMM training sizes,
600 and 1,200. Median fitted-posterior MSE on the 50,000 target
histories, averaged over units and states, decreases from 0.001902
at \(N=300\) to 0.000245 at \(N=2{,}400\).

For method \(a\), replication \(r\), and HMM training size \(N\), define
equal-state MSE by
\begin{equation*}
  L_{r,N}^{(a)}=\frac15\sum_{k=1}^5
  (\widehat\tau_{r,N,k}^{(a)}-\tau_{0k})^2.
\end{equation*}
The primary contrast at \(n=50{,}000\) averages the paired differences
over \(N=600\) and \(N=1{,}200\):
\begin{equation*}
  D_r=\frac12\sum_{N\in\{600,1{,}200\}}
    \{L_{r,N}^{(\mathrm{plain})}
      -L_{r,N}^{(\mathrm{OPG})}\}.
\end{equation*}
Positive \(D_r\) favors correction. Its paired 95\% Monte Carlo
confidence interval (CI) uses the mean
\(\bar D=200^{-1}\sum_{r=1}^{200}D_r\), the
standard deviation \(s_D\) across replications, and the 97.5th
percentile \(t_{0.975,199}\) of a Student \(t\) distribution with 199
degrees of freedom: \(\bar D\pm t_{0.975,199}s_D/\sqrt{200}\).
This interval measures
precision of the mean performance gain, rather than coverage of a
state effect. Held-out conditional average treatment effect (CATE)
error compares
\(\qb_{\widehat\eta}(H)^\top\widehat{\taub}\) with
\(\qb_0(H)^\top\taub_0\), a target invariant to state labels.

\subsection{Orthogonal Correction}
\label{app:sim-hmm-correction}

Plain unmixing uses the fitted terminal posterior in
\eqref{eq:unmix-plugin}. For correction, the target-law sensitivities
\(G_0\), \(J_0\), and \(B_0=G_0J_0^{-1}\) are defined in
\eqref{eq:unmix-hmm-sensitivity}. At the true HMM, randomization and
the mixture identity give the following expression, where
\(\E_{\mathrm{target}}\) averages over target histories:
\begin{equation*}
  G_0=-v_A\E_{\mathrm{target}}\!\left[
    \qb_0(H)\{\taub_0^\top D_\eta\qb_{\eta_0}(H)\}
  \right],
  \qquad v_A=\frac14.
\end{equation*}
The derivative of the leading posterior factor has conditional mean
zero at \((\taub_0,\eta_0)\). We use the resulting history-only plug-in
estimate on the same target prefix used for effect estimation:
\begin{equation*}
  \widehat G_{\mathrm{tar}}
  =-v_A \E_n\!\left[
    \qb_{\widehat\eta}(H)
    \{(\widehat{\taub}^{\,\mathrm{pl}})^\top
      D_\eta\qb_{\widehat\eta}(H)\}
  \right].
\end{equation*}
Posterior derivatives are averaged over exactly the target
units whose state effects are estimated.

Treating unit offsets and weekday covariates as given, correct
conditional HMM specification implies a mean-zero observed-history
score \(\bm{s}^H(H;\eta)\), which integrates over latent paths.
Differentiating its conditional mean also differentiates the history
density. At the true parameter,
\begin{equation*}
  J_0=-\E_{\mathrm{target}}\!\left[
    \bm{s}^H(H;\eta_0)\bm{s}^H(H;\eta_0)^\top
  \right].
\end{equation*}
The primary outer-product-of-gradients (OPG) approximation
\citep{berndt1974estimation} uses scores from each analyzed target
prefix. Let \(I\) be the identity matrix of the score dimension:
\begin{equation*}
  \widehat J_{\mathrm{OPG}}
  =-\left\{\E_n[
      \bm{s}^H(H;\widehat\eta)
      \bm{s}^H(H;\widehat\eta)^\top]
      +10^{-4}I\right\},
  \qquad
  \widehat B_{\mathrm{OPG}}
  =\widehat G_{\mathrm{tar}}\widehat J_{\mathrm{OPG}}^{-1}.
\end{equation*}
The score Gram is positive semidefinite, and the ridge stabilizes
inversion without repeated score evaluations at perturbed HMM
parameters. We insert \(\widehat B_{\mathrm{OPG}}\) into
\eqref{eq:unmix-orthogonal-estimator}. A secondary arm uses the same
\(\widehat G_{\mathrm{tar}}\) and estimates \(J_0\) by central differences
of the mean score on the same target prefix, symmetrizes it, and uses
an outcome-blind ridge to cap the information condition number at
2,000. This arm was prespecified for \(N=600\) and \(N=1{,}200\);
its \(N=300\) and \(N=2{,}400\) results were added after the primary
run using the same seeds and panels.

The history-only \(\widehat G_{\mathrm{tar}}\) is a plug-in estimate of
the same population \(G_0\) targeted by the direct sample derivative
in \eqref{eq:unmix-sensitivity-estimator}. OPG uses the information
identity at the correctly specified true HMM. Fitted scores, finite
target data, and especially the fixed ridge make
\(\widehat B_{\mathrm{OPG}}\) a regularized approximation to \(B_0\).
The exact zero-derivative assertion of
Theorem~\ref{thm:orthogonal-unmixing} concerns unregularized \(B_0\);
the fixed-ridge implementation is assessed empirically here.
Misspecification can also invalidate the information identity.

A true-posterior oracle uses \(\qb_0(H_i)\) with the same target
outcomes; it does not observe each unit's realized latent state.
For the oracle, plain unmixing, OPG, and capped-Jacobian methods, we
report nominal 95\% confidence intervals of the form
\(\widehat\tau_k\pm1.96\widehat{\operatorname{SE}}_k\).
Standard errors use the empirical sandwich variance of each
target-unit moment, including the HMM score term for corrected
methods. They treat the fitted HMM and estimated correction coefficient
as fixed. Coverage averages the five state effects over 200
replications.

\subsection{Full-workflow state-effect results}
\label{app:sim-hmm-results}

For the well-conditioned five-state design, the tables below give all
four estimators at every HMM training size \(N\) and target size \(n\)
for the three effect profiles. Reported RMSE is the square root of the
mean equal-state MSE over 200 replications at each \((N,n)\).
Each method cell shows equal-state RMSE (Monte Carlo SE) above
empirical coverage (Monte Carlo SE) of nominal 95\% state-effect
intervals. Coverage averages five states within each of 200
replications. These Monte Carlo SEs describe uncertainty in the
simulation summaries; the
unit-level standard errors used to construct the intervals treat
the fitted HMM and correction coefficient as fixed. The oracle uses
the true terminal posterior and the same target outcomes, without
observing realized states.

\paragraph{Paired correction gains.}
Under monotone effects with \(n=50{,}000\), the primary paired contrast
averaging \(N=600\) and \(N=1{,}200\) within replication gives an OPG
state-MSE gain of \(0.502\times10^{-3}\), with paired 95\% Monte Carlo
interval \([0.413,0.591]\times10^{-3}\). This is a 31.4\% reduction
in mean state MSE, with lower MSE in 84.0\% of replications. The
corresponding capped-\(J\) gain is \(0.390\times10^{-3}\), interval
\([0.279,0.501]\times10^{-3}\), a 24.4\% reduction and a 69.0\% win
rate. Table~\ref{tab:full-workflow-paired-gains} separates the HMM
training sizes; positive gains favor correction.

\begin{table}[t]
\centering
\caption{Paired plain-minus-corrected equal-state MSE gain at
\(n=50{,}000\) under monotone effects, in units of \(10^{-3}\).
Parentheses give Monte Carlo SEs, brackets give paired 95\% Monte Carlo
intervals, and win rates give the fraction of 200 replications with
lower MSE after correction. Capped-\(J\) results at \(N=300\) and
\(N=2{,}400\) were added after the primary run.}
\label{tab:full-workflow-paired-gains}
\scriptsize
\setlength{\tabcolsep}{4pt}
\begin{tabular}{rcccc}
\toprule
\(N\) & OPG gain (SE) [95\% CI] & Capped \(J\) gain (SE) [95\% CI]
& OPG win & Capped \(J\) win \\
\midrule
300   & 1.321 (0.108) [1.107, 1.535] & 0.547 (0.198) [0.157, 0.937] & 89.5\% & 65.5\% \\
600   & 0.655 (0.056) [0.546, 0.765] & 0.468 (0.076) [0.317, 0.618] & 82.0\% & 70.5\% \\
1,200 & 0.348 (0.044) [0.262, 0.434] & 0.313 (0.053) [0.209, 0.416] & 76.0\% & 67.5\% \\
2,400 & 0.153 (0.029) [0.095, 0.210] & 0.151 (0.032) [0.088, 0.214] & 63.0\% & 61.0\% \\
\bottomrule
\end{tabular}
\end{table}

\paragraph{Posterior accuracy and score conditioning.}
Table~\ref{tab:full-workflow-posterior-accuracy} uses all 50,000
target histories for the posterior diagnostic and raw OPG condition
number. Posterior MSE falls as \(N\) grows, while raw OPG condition
numbers remain near \(10^3\); the separate numerical-Jacobian
condition cap also activates in many fits. All 800 HMM fits and
scheduled numerical-Jacobian calculations completed.

\begin{table}[t]
\centering
\caption{Fitted-HMM posterior accuracy and correction conditioning by
training size \(N\). Posterior MSE compares fitted and true terminal
posteriors on 50,000 target histories; entries are medians over 200
replications. Outcome-blind matching is a diagnostic and does not
choose reported state labels. Condition numbers use the raw target
OPG at \(n=50{,}000\). Parentheses are bootstrap Monte Carlo SEs of
medians from 2,000 resamples; mismatch and cap entries are counts out
of 200 fits.}
\label{tab:full-workflow-posterior-accuracy}
\scriptsize
\setlength{\tabcolsep}{4pt}
\begin{tabular}{rccccc}
\toprule
\(N\) & Posterior MSE & Matched MSE & Mismatches
& OPG cond. & Numerical cap active \\
\midrule
300   & 0.001902 (0.000065) & 0.001902 (0.000065) & 3/200 & 1,358 (62) & 198/200 \\
600   & 0.000947 (0.000027) & 0.000947 (0.000027) & 0/200 & 1,031 (11) & 173/200 \\
1,200 & 0.000472 (0.000014) & 0.000472 (0.000014) & 0/200 & 936 (7.0) & 75/200 \\
2,400 & 0.000245 (0.0000073) & 0.000245 (0.0000073) & 0/200 & 893 (2.8) & 11/200 \\
\bottomrule
\end{tabular}
\end{table}

\paragraph{Complete state-effect and coverage grids.}
Tables~\ref{tab:full-workflow-monotone-full}--\ref{tab:full-workflow-constant-full}
report every target prefix for the monotone, nonmonotone, and constant
effect profiles, respectively.

\begin{table}[t]
\centering
\caption{Monotone effects: equal-state RMSE and coverage of nominal
95\% state-effect intervals across all HMM and target sample sizes.
Each cell gives RMSE (Monte Carlo SE) above coverage (Monte Carlo SE)
over 200 replications.}
\label{tab:full-workflow-monotone-full}
\footnotesize
\setlength{\tabcolsep}{4pt}
\begin{tabular}{rrcccc}
\toprule
\(N\) & \(n\) & Oracle & Plain & OPG & Capped \(J\) \\
\midrule
300 & 2,000 & \shortstack{0.15773 (0.0034)\\0.949 (0.0071)} & \shortstack{0.17206 (0.0058)\\0.927 (0.0088)} & \shortstack{0.16861 (0.0059)\\0.938 (0.0082)} & \shortstack{0.16686 (0.0059)\\0.945 (0.0079)} \\
& 10,000 & \shortstack{0.07231 (0.0018)\\0.947 (0.0077)} & \shortstack{0.09468 (0.0065)\\0.896 (0.0097)} & \shortstack{0.08736 (0.0070)\\0.927 (0.0084)} & \shortstack{0.09033 (0.0066)\\0.918 (0.0089)} \\
& 50,000 & \shortstack{0.03248 (0.00083)\\0.936 (0.0086)} & \shortstack{0.06722 (0.0083)\\0.771 (0.013)} & \shortstack{0.05655 (0.0098)\\0.899 (0.011)} & \shortstack{0.06302 (0.0087)\\0.849 (0.012)} \\
\midrule
600 & 2,000 & \shortstack{0.15773 (0.0034)\\0.949 (0.0071)} & \shortstack{0.16204 (0.0036)\\0.943 (0.0071)} & \shortstack{0.15947 (0.0036)\\0.948 (0.0070)} & \shortstack{0.15892 (0.0036)\\0.953 (0.0068)} \\
& 10,000 & \shortstack{0.07231 (0.0018)\\0.947 (0.0077)} & \shortstack{0.07799 (0.0019)\\0.926 (0.0087)} & \shortstack{0.07361 (0.0019)\\0.939 (0.0077)} & \shortstack{0.07433 (0.0020)\\0.942 (0.0076)} \\
& 50,000 & \shortstack{0.03248 (0.00083)\\0.936 (0.0086)} & \shortstack{0.04200 (0.0010)\\0.859 (0.011)} & \shortstack{0.03330 (0.00086)\\0.930 (0.0086)} & \shortstack{0.03601 (0.0011)\\0.919 (0.0092)} \\
\midrule
1,200 & 2,000 & \shortstack{0.15773 (0.0034)\\0.949 (0.0071)} & \shortstack{0.16065 (0.0035)\\0.941 (0.0073)} & \shortstack{0.15888 (0.0035)\\0.949 (0.0069)} & \shortstack{0.15817 (0.0036)\\0.951 (0.0069)} \\
& 10,000 & \shortstack{0.07231 (0.0018)\\0.947 (0.0077)} & \shortstack{0.07604 (0.0019)\\0.929 (0.0086)} & \shortstack{0.07309 (0.0018)\\0.947 (0.0074)} & \shortstack{0.07303 (0.0019)\\0.945 (0.0076)} \\
& 50,000 & \shortstack{0.03248 (0.00083)\\0.936 (0.0086)} & \shortstack{0.03788 (0.00096)\\0.905 (0.0097)} & \shortstack{0.03296 (0.00087)\\0.930 (0.0088)} & \shortstack{0.03350 (0.00091)\\0.934 (0.0083)} \\
\midrule
2,400 & 2,000 & \shortstack{0.15773 (0.0034)\\0.949 (0.0071)} & \shortstack{0.15798 (0.0034)\\0.950 (0.0069)} & \shortstack{0.15824 (0.0035)\\0.953 (0.0066)} & \shortstack{0.15803 (0.0034)\\0.949 (0.0069)} \\
& 10,000 & \shortstack{0.07231 (0.0018)\\0.947 (0.0077)} & \shortstack{0.07357 (0.0018)\\0.945 (0.0076)} & \shortstack{0.07281 (0.0018)\\0.946 (0.0077)} & \shortstack{0.07277 (0.0019)\\0.948 (0.0075)} \\
& 50,000 & \shortstack{0.03248 (0.00083)\\0.936 (0.0086)} & \shortstack{0.03505 (0.00089)\\0.921 (0.0087)} & \shortstack{0.03280 (0.00086)\\0.938 (0.0084)} & \shortstack{0.03283 (0.00086)\\0.936 (0.0082)} \\
\bottomrule
\end{tabular}
\end{table}

\begin{table}[t]
\centering
\caption{Nonmonotone effects: equal-state RMSE and coverage of nominal
95\% state-effect intervals across all HMM and target sample sizes.
Each cell gives RMSE (Monte Carlo SE) above coverage (Monte Carlo SE)
over 200 replications.}
\label{tab:full-workflow-permuted-full}
\footnotesize
\setlength{\tabcolsep}{4pt}
\begin{tabular}{rrcccc}
\toprule
\(N\) & \(n\) & Oracle & Plain & OPG & Capped \(J\) \\
\midrule
300 & 2,000 & \shortstack{0.16494 (0.0036)\\0.942 (0.0073)} & \shortstack{0.22614 (0.023)\\0.911 (0.0096)} & \shortstack{0.21612 (0.023)\\0.933 (0.0083)} & \shortstack{0.26279 (0.019)\\0.893 (0.010)} \\
& 10,000 & \shortstack{0.07294 (0.0018)\\0.947 (0.0077)} & \shortstack{0.18018 (0.032)\\0.808 (0.014)} & \shortstack{0.16245 (0.035)\\0.914 (0.0097)} & \shortstack{0.20981 (0.025)\\0.772 (0.015)} \\
& 50,000 & \shortstack{0.03202 (0.00081)\\0.946 (0.0073)} & \shortstack{0.16850 (0.034)\\0.570 (0.014)} & \shortstack{0.14917 (0.038)\\0.837 (0.012)} & \shortstack{0.19231 (0.027)\\0.560 (0.017)} \\
\midrule
600 & 2,000 & \shortstack{0.16494 (0.0036)\\0.942 (0.0073)} & \shortstack{0.17595 (0.0038)\\0.930 (0.0081)} & \shortstack{0.16931 (0.0037)\\0.940 (0.0074)} & \shortstack{0.20242 (0.0074)\\0.914 (0.0086)} \\
& 10,000 & \shortstack{0.07294 (0.0018)\\0.947 (0.0077)} & \shortstack{0.09536 (0.0025)\\0.870 (0.011)} & \shortstack{0.07641 (0.0019)\\0.938 (0.0083)} & \shortstack{0.11388 (0.0049)\\0.852 (0.012)} \\
& 50,000 & \shortstack{0.03202 (0.00081)\\0.946 (0.0073)} & \shortstack{0.06857 (0.0018)\\0.711 (0.015)} & \shortstack{0.03588 (0.00091)\\0.927 (0.0082)} & \shortstack{0.08649 (0.0061)\\0.719 (0.017)} \\
\midrule
1,200 & 2,000 & \shortstack{0.16494 (0.0036)\\0.942 (0.0073)} & \shortstack{0.16908 (0.0036)\\0.939 (0.0075)} & \shortstack{0.16866 (0.0036)\\0.944 (0.0073)} & \shortstack{0.18139 (0.0044)\\0.938 (0.0077)} \\
& 10,000 & \shortstack{0.07294 (0.0018)\\0.947 (0.0077)} & \shortstack{0.08347 (0.0020)\\0.908 (0.0095)} & \shortstack{0.07508 (0.0019)\\0.945 (0.0076)} & \shortstack{0.08485 (0.0023)\\0.920 (0.0094)} \\
& 50,000 & \shortstack{0.03202 (0.00081)\\0.946 (0.0073)} & \shortstack{0.05195 (0.0012)\\0.794 (0.013)} & \shortstack{0.03357 (0.00084)\\0.938 (0.0077)} & \shortstack{0.04690 (0.0020)\\0.870 (0.012)} \\
\midrule
2,400 & 2,000 & \shortstack{0.16494 (0.0036)\\0.942 (0.0073)} & \shortstack{0.16665 (0.0035)\\0.942 (0.0076)} & \shortstack{0.16815 (0.0035)\\0.946 (0.0072)} & \shortstack{0.17160 (0.0038)\\0.949 (0.0069)} \\
& 10,000 & \shortstack{0.07294 (0.0018)\\0.947 (0.0077)} & \shortstack{0.07851 (0.0020)\\0.928 (0.0083)} & \shortstack{0.07459 (0.0019)\\0.949 (0.0074)} & \shortstack{0.07677 (0.0019)\\0.943 (0.0082)} \\
& 50,000 & \shortstack{0.03202 (0.00081)\\0.946 (0.0073)} & \shortstack{0.04339 (0.0011)\\0.864 (0.011)} & \shortstack{0.03310 (0.00084)\\0.940 (0.0076)} & \shortstack{0.03527 (0.00090)\\0.933 (0.0079)} \\
\bottomrule
\end{tabular}
\end{table}

\begin{table}[t]
\centering
\caption{Constant effects: equal-state RMSE and coverage of nominal
95\% state-effect intervals across all HMM and target sample sizes.
Each cell gives RMSE (Monte Carlo SE) above coverage (Monte Carlo SE)
over 200 replications.}
\label{tab:full-workflow-constant-full}
\footnotesize
\setlength{\tabcolsep}{4pt}
\begin{tabular}{rrcccc}
\toprule
\(N\) & \(n\) & Oracle & Plain & OPG & Capped \(J\) \\
\midrule
300 & 2,000 & \shortstack{0.15451 (0.0033)\\0.948 (0.0073)} & \shortstack{0.15481 (0.0035)\\0.947 (0.0073)} & \shortstack{0.15580 (0.0035)\\0.944 (0.0073)} & \shortstack{0.15334 (0.0034)\\0.950 (0.0069)} \\
& 10,000 & \shortstack{0.07057 (0.0018)\\0.948 (0.0075)} & \shortstack{0.07061 (0.0019)\\0.949 (0.0075)} & \shortstack{0.07110 (0.0019)\\0.943 (0.0077)} & \shortstack{0.07051 (0.0019)\\0.942 (0.0078)} \\
& 50,000 & \shortstack{0.03122 (0.00079)\\0.942 (0.0080)} & \shortstack{0.03120 (0.00076)\\0.937 (0.0080)} & \shortstack{0.03142 (0.00077)\\0.936 (0.0079)} & \shortstack{0.03102 (0.00075)\\0.941 (0.0079)} \\
\midrule
600 & 2,000 & \shortstack{0.15451 (0.0033)\\0.948 (0.0073)} & \shortstack{0.15509 (0.0035)\\0.945 (0.0075)} & \shortstack{0.15537 (0.0034)\\0.948 (0.0073)} & \shortstack{0.15460 (0.0034)\\0.950 (0.0071)} \\
& 10,000 & \shortstack{0.07057 (0.0018)\\0.948 (0.0075)} & \shortstack{0.07095 (0.0018)\\0.947 (0.0077)} & \shortstack{0.07117 (0.0019)\\0.947 (0.0076)} & \shortstack{0.07084 (0.0019)\\0.949 (0.0075)} \\
& 50,000 & \shortstack{0.03122 (0.00079)\\0.942 (0.0080)} & \shortstack{0.03118 (0.00078)\\0.937 (0.0083)} & \shortstack{0.03132 (0.00078)\\0.939 (0.0083)} & \shortstack{0.03107 (0.00078)\\0.941 (0.0082)} \\
\midrule
1,200 & 2,000 & \shortstack{0.15451 (0.0033)\\0.948 (0.0073)} & \shortstack{0.15499 (0.0034)\\0.947 (0.0074)} & \shortstack{0.15520 (0.0034)\\0.949 (0.0072)} & \shortstack{0.15505 (0.0034)\\0.953 (0.0066)} \\
& 10,000 & \shortstack{0.07057 (0.0018)\\0.948 (0.0075)} & \shortstack{0.07072 (0.0018)\\0.948 (0.0078)} & \shortstack{0.07084 (0.0018)\\0.945 (0.0078)} & \shortstack{0.07070 (0.0018)\\0.946 (0.0077)} \\
& 50,000 & \shortstack{0.03122 (0.00079)\\0.942 (0.0080)} & \shortstack{0.03129 (0.00080)\\0.939 (0.0083)} & \shortstack{0.03138 (0.00080)\\0.939 (0.0082)} & \shortstack{0.03129 (0.00081)\\0.942 (0.0080)} \\
\midrule
2,400 & 2,000 & \shortstack{0.15451 (0.0033)\\0.948 (0.0073)} & \shortstack{0.15440 (0.0033)\\0.949 (0.0072)} & \shortstack{0.15479 (0.0033)\\0.949 (0.0074)} & \shortstack{0.15450 (0.0033)\\0.949 (0.0071)} \\
& 10,000 & \shortstack{0.07057 (0.0018)\\0.948 (0.0075)} & \shortstack{0.07058 (0.0018)\\0.949 (0.0074)} & \shortstack{0.07067 (0.0018)\\0.945 (0.0078)} & \shortstack{0.07054 (0.0018)\\0.945 (0.0078)} \\
& 50,000 & \shortstack{0.03122 (0.00079)\\0.942 (0.0080)} & \shortstack{0.03126 (0.00080)\\0.938 (0.0083)} & \shortstack{0.03131 (0.00080)\\0.941 (0.0081)} & \shortstack{0.03126 (0.00080)\\0.941 (0.0081)} \\
\bottomrule
\end{tabular}
\end{table}

Across the monotone grid, OPG gains grow with the target sample size
at each fixed \(N\), as Figure~\ref{fig:orthogonal-opg-grid} shows.
Posterior error also decreases with HMM training size, consistent with
less first-stage error to correct. Under nonmonotone effects, OPG still
improves state RMSE at large target sizes, whereas capped \(J\) can
worsen it. Under constant effects, all four methods have similar RMSE.

\subsection{Prediction, effect profiles, and conditioning}
\label{app:sim-hmm-checks}

At the two primary HMM training sizes, correction does not improve
held-out CATE prediction. At \(N=600\), CATE RMSE is 0.04166 for
plain unmixing, 0.04534 for OPG, and 0.04835 for capped \(J\); at
\(N=1{,}200\), the corresponding
values are 0.03288, 0.03588, and 0.03709. Under nonmonotone effects,
the pooled OPG state-MSE gain is
\(2.493\times10^{-3}\), with paired interval
\([2.195,2.792]\times10^{-3}\). Under constant effects, its gain is
the small negative value \(-0.0072\times10^{-3}\), with interval
\([-0.0119,-0.0026]\times10^{-3}\). For capped \(J\), the pooled
nonmonotone gain is \(-1.140\times10^{-3}\), with interval
\([-2.266,-0.014]\times10^{-3}\); its constant-effect gain is near
zero. Thus the capped-Jacobian benefit is not uniform over effect
profiles.

Weak score directions can amplify error through \(B_0=G_0J_0^{-1}\).
At 50,000 target units, median condition numbers of the raw OPG
score Gram are 1,031 and 936 for \(N=600\) and \(N=1{,}200\);
the numerical-Jacobian condition cap activates in 173 and 75 of
200 fits, respectively. The smallest eigenvalue and condition number
of the unregularized target score information, together with the ridge
used for inversion, are useful diagnostics. The reported gains concern
regularized correction and do not evaluate a rule for withholding it
when score information is weak.

The HMM family and state count are correctly specified, treatment is
randomized with known propensity, and the history-conditional mean
outcome is zero by design. Exploratory pilots informed the
data-generating process. The same target units estimate the
correction and solve the effect moment; reported unit-level
standard errors hold the fitted HMM and correction coefficient fixed.
These intervals have no general unconditional 95\% coverage
guarantee. The experiments do not address HMM misspecification,
observational treatment, or alternative regularization rules.

\clearpage
\section{Additional case-study figures}
\label{app:case-study-figures}

\begin{figure}[htbp]
\centering
\includegraphics[width=\textwidth]{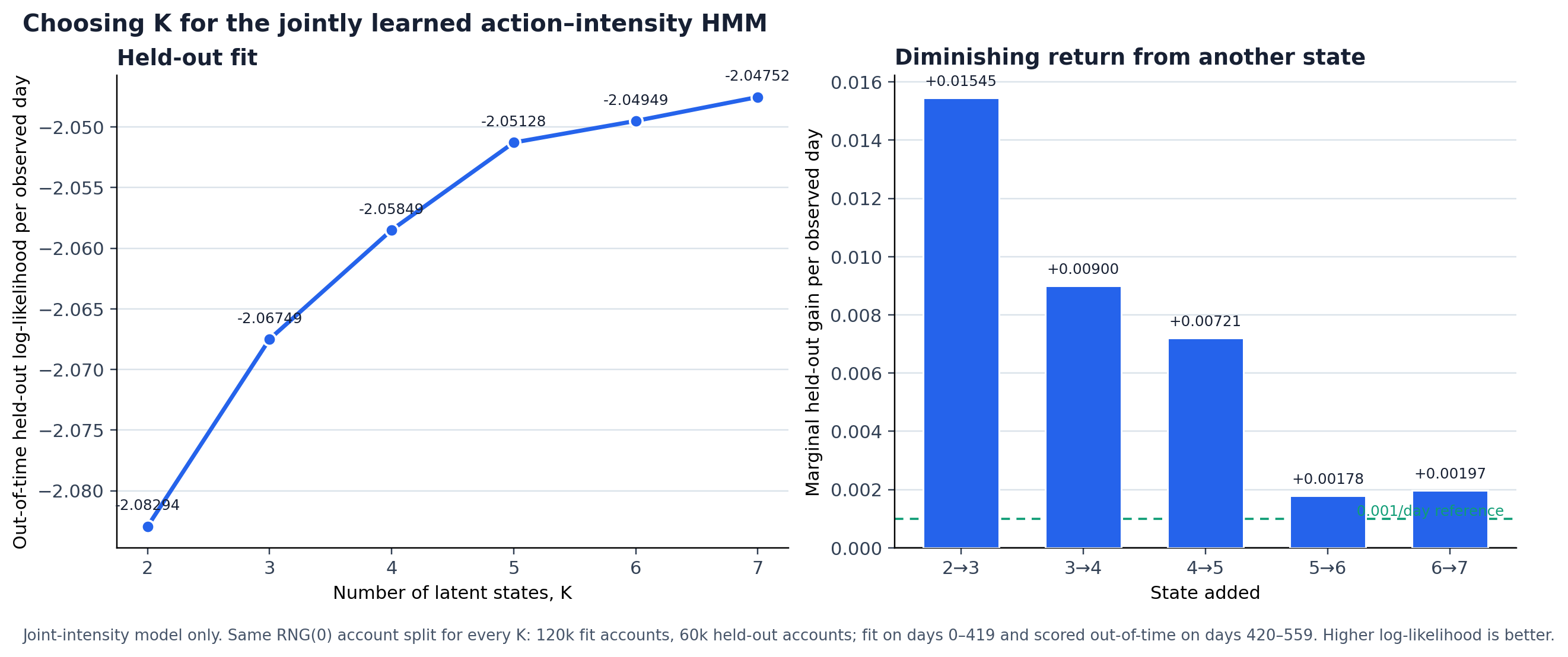}
\caption{\small Selection of the HMM state count using out-of-time
held-out log-likelihood per observed day. Candidate joint
action--intensity HMMs use the same 120,000 fitting units and 60,000
held-out units across state counts. Models are fitted on days 0--419
and evaluated on days 420--559. The right panel shows the gain from
adding one state, which drops sharply after \(K=5\).}
\label{fig:hmm-k-selection}
\end{figure}

\begin{figure}[htbp]
\centering
\includegraphics[width=\textwidth]{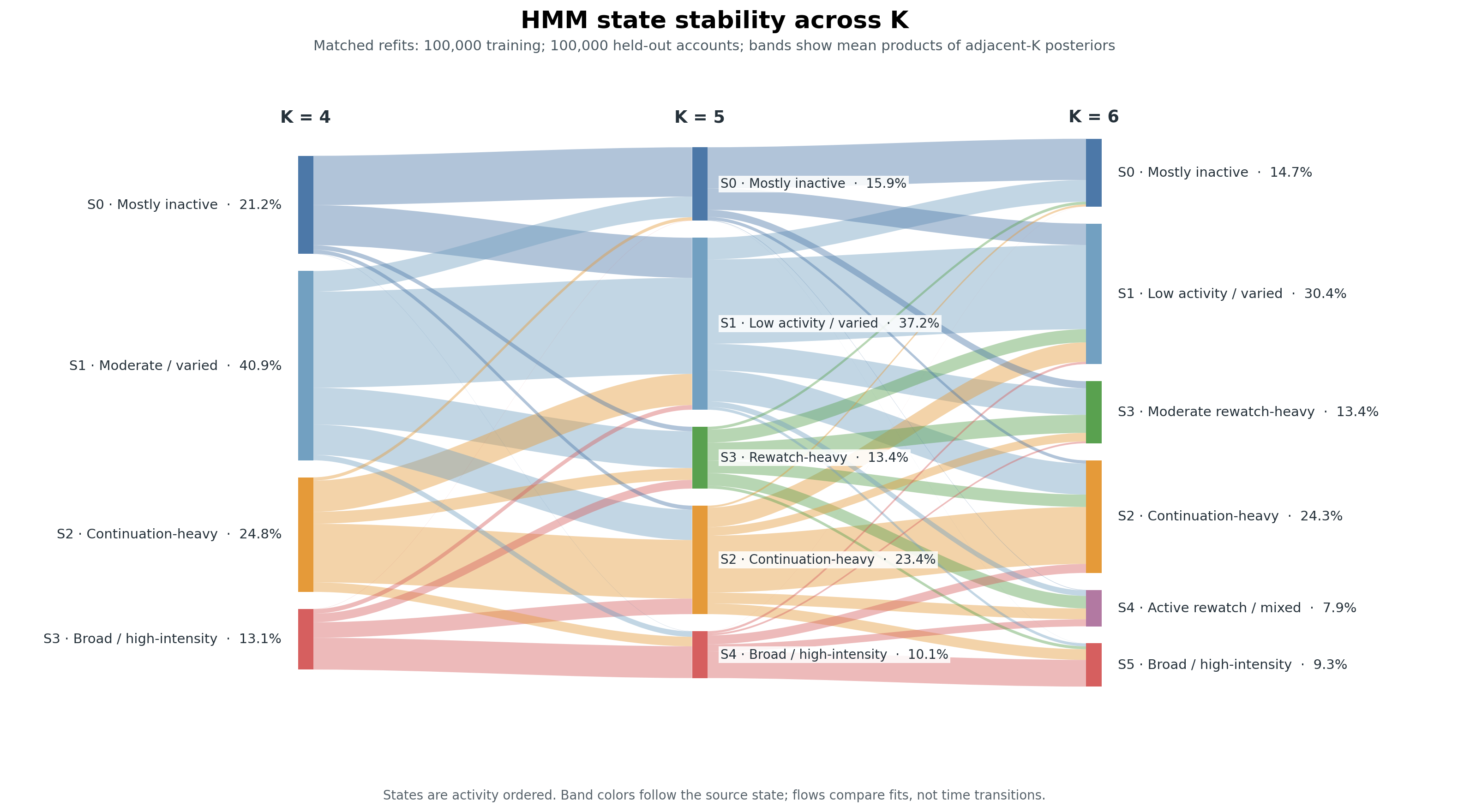}
\caption{\small Stability of behavioral-state posteriors across HMM
fits with \(K=4,5,6\). The fits use 100,000 training accounts and are
compared on the same 100,000 held-out accounts. Node percentages are
mean posterior state probabilities on the held-out accounts; each
band's width is proportional to the mean product of its endpoint
posterior probabilities for adjacent state counts. States are ordered
by activity. Bands compare fitted models and do not represent
transitions over time.}
\label{fig:hmm-state-stability}
\end{figure}

\begin{figure}[htbp]
\centering
\includegraphics[width=\textwidth]{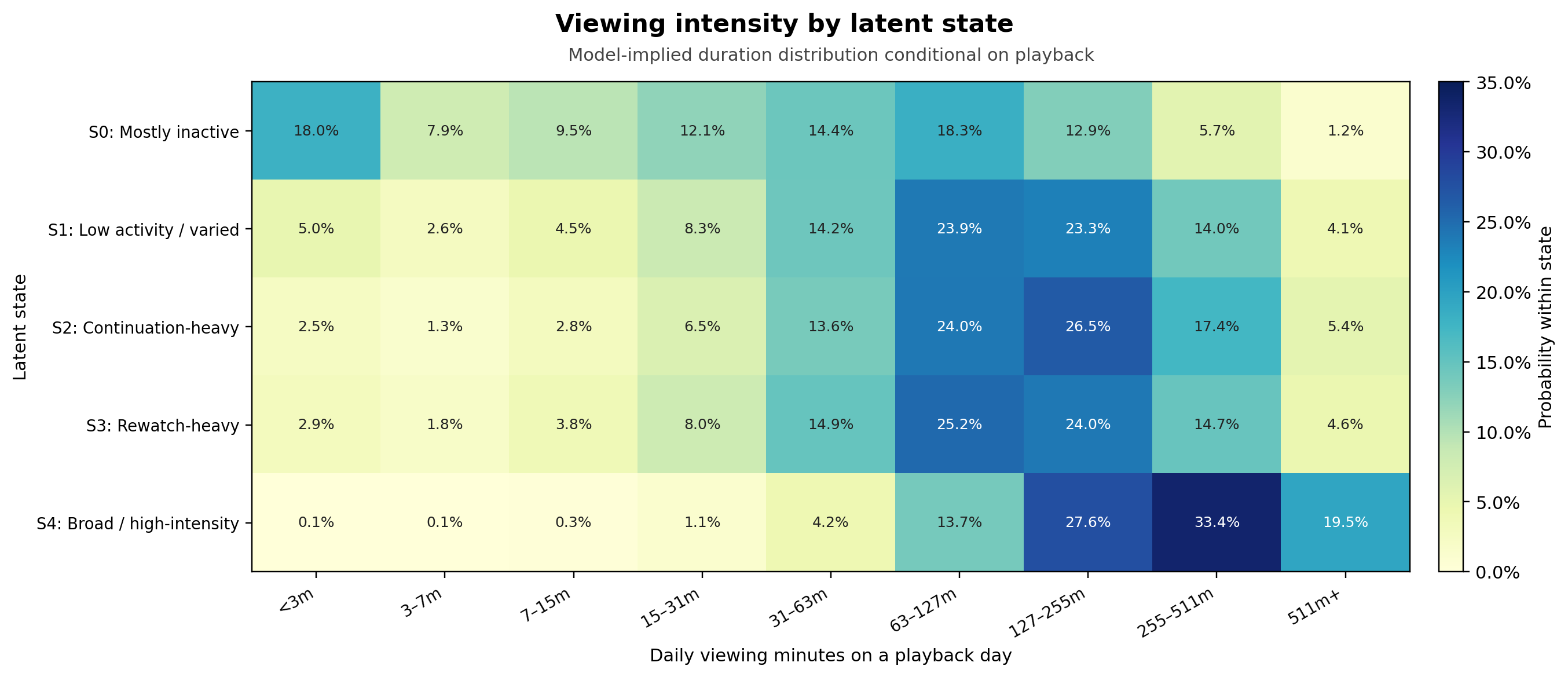}
\caption{\small Model-implied distribution of daily viewing duration
on a playback day, conditional on each state of the fitted five-state
HMM. Each row gives the probability of a viewing-minute bin and sums
to 100\%. The behavioral profile names correspond to those in
Figure~\ref{fig:state_action}; the display order differs.}
\label{fig:state-intensity}
\end{figure}
\end{document}